\renewcommand\footnotetextcopyrightpermission[1]{}
\newif\iflongversion
\definecolor{key-color}{rgb}{0.8, 0.47, 0.196}
\DeclareMathAlphabet{\mathpzc}{OT1}{pzc}{m}{it}
\def\expandafter\UrlBreaks\expandafter{\UrlBreaks\do\/\do\-\do\.} 
\definecolor[named]{Purple}{cmyk}{0.55,1,0,0.15}
\definecolor[named]{DarkBlue}{cmyk}{1,0.58,0,0.21}
\newenvironment{squishedlist}
{
  \begin{list}{$\bullet$}
   {
     \setlength{\itemsep}{0pt}
     \setlength{\parsep}{2pt}
     \setlength{\topsep}{1.0pt}
     \setlength{\partopsep}{0pt}
     \setlength{\leftmargin}{01.5em}
     \setlength{\labelwidth}{1em}
     \setlength{\labelsep}{0.5em}
   }
}
{
   \end{list}
}
\newcommand{\noun}[1]{\textsc{#1}}
\newcommand{\graphsurge}[1]{\noun{Graphsurge}}
\newtheorem{theorem}{Theorem}[section]
\newtheorem{corollary}{Corollary}[section]
\newtheorem{proposition}{Proposition}[section]
\newcounter{theorem3}
\newtheorem{definition}[theorem3]{Definition}
\newcommand{\A}{\mathcal{A}}
\newcommand{\R}{\mathcal{R}}
\newcommand{\D}{\mathcal{D}}
\newcommand{\E}{\mathcal{E}}
\newcommand{\I}{\mathcal{I}}
\providecommand{\customgenericname}{}
\newcommand{\newcustomtheorem}[2]{%
  \newenvironment{#1}[1]
  {%
   \renewcommand\customgenericname{#2}%
   \renewcommand\theinnercustomgeneric{##1}%
   \innercustomgeneric
  }
  {\endinnercustomgeneric}
}
\newcommand*{\relrelbarsep}{.386ex}
\newcommand*{\relrelbar}{%
  \mathrel{%
    \mathpalette\@relrelbar\relrelbarsep
  }%
}
\newcommand*{\@relrelbar}[2]{%
  \raise#2\hbox to 0pt{$\m@th#1\relbar$\hss}%
  \lower#2\hbox{$\m@th#1\relbar$}%
}
\providecommand*{\rightrightarrowsfill@}{%
  \arrowfill@\relrelbar\relrelbar\rightrightarrows
}
\providecommand*{\leftleftarrowsfill@}{%
  \arrowfill@\leftleftarrows\relrelbar\relrelbar
}
\providecommand*{\xrightrightarrows}[2][]{%
  \ext@arrow 0359\rightrightarrowsfill@{#1}{#2}%
}
\providecommand*{\xleftleftarrows}[2][]{%
  \ext@arrow 3095\leftleftarrowsfill@{#1}{#2}%
}
\begin{document}

\title{Accurate Summary-based Cardinality Estimation Through the Lens of Cardinality Estimation Graphs
}

\author{Jeremy Chen}
\affiliation{\institution{University of Waterloo}}
\email{jeremy.chen@uwaterloo.ca}

\author{Yuqing Huang}
\affiliation{\institution{University of Waterloo}}
\email{y558huan@uwaterloo.ca}

\author{Mushi Wang}
\affiliation{\institution{University of Waterloo}}
\email{m358wang@uwaterloo.ca}

\author{Semih Salihoglu}
\affiliation{\institution{University of Waterloo}}
\email{semih.salihoglu@uwaterloo.ca}

\author{Ken Salem}
\affiliation{\institution{University of Waterloo}}
\email{ken.salem@uwaterloo.ca}

\begin{abstract}

We study two classes of summary-based cardinality estimators that 
use statistics about input relations and small-size joins 
in the context of graph database management systems: (i)
optimistic estimators that make uniformity and conditional independence 
assumptions; and  (ii) the recent pessimistic estimators that 
use information theoretic linear programs. 
We begin by addressing the problem of how to make accurate estimates for 
optimistic estimators. We model these estimators
as picking bottom-to-top paths in a {\em cardinality estimation graph} (CEG), 
which contains sub-queries as nodes and weighted edges between sub-queries 
that represent average degrees. 
We outline a space of heuristics to make an optimistic estimate
in this framework and show that effective heuristics depend on the structure of the input
queries. We observe that on acyclic queries and queries with small-size cycles, 
using the maximum-weight path is an effective technique to address the well known underestimation
problem for optimistic estimators. 
We show that on a large suite of datasets and workloads, the accuracy of such estimates is 
up to three orders of magnitude more accurate in mean q-error than some prior heuristics that 
have been proposed in prior work. 
In contrast, we show that on queries with larger cycles these estimators tend to overestimate, which 
can partially be addressed by using minimum weight paths and more effectively by using an alternative CEG.
We then show that CEGs can also model the recent pessimistic estimators. This surprising result 
allows us to connect two disparate lines of 
work on optimistic and pessimistic estimators, adopt an optimization from pessimistic 
estimators to optimistic ones, and provide insights into the pessimistic estimators, 
such as showing that there are alternative combinatorial solutions to the linear programs that define them.

\end{abstract}

\maketitle
\pagestyle{plain}

\section{Introduction}
\label{sec:introduction}

The problem of estimating the output size of a natural multi-join
query (henceforth {\em join query} for short), is a fundamental
problem that is solved in the query optimizers of database management
systems when generating efficient query plans. This problem arises
both in systems that manage relational data as well those that manage
graph-structured data where systems need to estimate the
cardinalities of subgraphs in their input graphs. It is well known
that both problems are equivalent, since subgraph queries can
equivalently be written as join queries over binary relations that
store the edges of a graph.

We focus on the prevalent technique used by existing systems 
of using statistics about the base relations or outputs of small-size joins to estimate
cardinalities of joins. These techniques use these statistics in algebraic formulas
that make independence and uniformity assumptions to generate estimates for 
queries~\cite{aboulnaga:markov, maduko:md-tree, 
mhedhbi:optimizer, neumann:characteristic-sets}.
We will refer to these as \emph{summary-based optimistic estimators} (optimistic estimators for short),
to emphasize that these estimators can make both under and overestimations.
This contrasts with the recent {\em pessimistic estimators} that are based on
worst-case optimal join size
bounds~\cite{khamis:cllp, atserias:agm, cai:pessimistic, 
gottlob:glvv, joglekar:degree} and avoid underestimation, albeit 
using very loose, so inaccurate, estimates~\cite{park:gcare}. 
In this work, we study how to make accurate estimations using optimistic estimators 
using a new framework that we call {\em cardinality estimation graphs} (CEGs) to represent them.
We observe and address several shortcomings of these estimators under the CEG framework.
We show that this framework is useful in two additional ways: (i) CEGs can also
represent the pessimistic estimators, establishing that these two classes of 
estimators are in fact surprisingly connected;  and (ii) CEGs are
useful mathematical tools to prove several theoretical properties of 
pessimistic estimators.

\begin{figure}[t!]
	\centering
	\captionsetup{justification=centering}
    	\includegraphics[width=0.45\columnwidth]{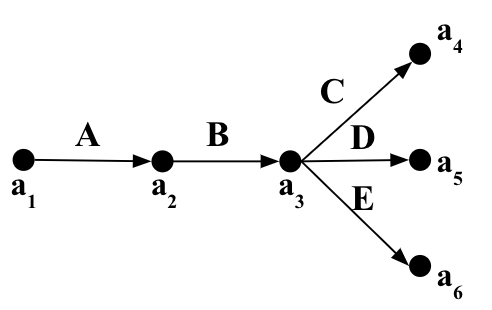}
		\vspace{-10pt}
  \caption{Example subgraph query $Q_{5f}$.}
	\label{fig:running-query}
	\vspace{-10pt}
\end{figure}

We begin by showing that the algebraic formulas of prior optimistic estimators can be 
modeled as picking a bottom-to-top path in a weighted CEG, which we call $CEG_O$, 
for {\bf O}ptimistic. In this CEG nodes are intermediate 
sub-queries and edges weights are average degree statistics that extend sub-queries to 
larger queries. For example, consider the subgraph query in 
Figure~\ref{fig:running-query} and the input dataset in Figure~\ref{fig:running-dataset},
whose $CEG_O$ is shown in Figure~\ref{fig:mt-jg-h3}.
We observe that in estimators that can be represented with CEGs, 
there is often more than one way to generate an estimate for a query,
corresponding to different bottom-to-up paths and that this decision
does not have a clear answer in the case of $CEG_O$.
For example, consider the subgraph query in 
Figure~\ref{fig:running-query}. Given that we have the accurate 
cardinalities of all subqueries of size $\le$ 2 available,
there are 252 formulas (or bottom-to-up paths in the CEG 
of optimistic estimators) 
to estimate the cardinality of the query.
Examples of these formulas are:
\begin{squishedlist}
\item  $|\xrightarrow{A}\xrightarrow{B}| \times \frac{|\xrightarrow{B}\xrightarrow{C}|}{|\xrightarrow{B}|} \times \frac{|\xleftarrow{C}\xrightarrow{D}|}{|\xrightarrow{C}|} \times \frac{|\xleftarrow{D}\xrightarrow{E}|}{|\xrightarrow{D}|}$
\item  $|\xrightarrow{A}\xrightarrow{B}| \times \frac{|\xrightarrow{B}\xrightarrow{D}|}{|\xrightarrow{B}|} \times \frac{|\xleftarrow{C}\xrightarrow{D}|}{|\xrightarrow{D}|} \times \frac{|\xrightarrow{B}\xrightarrow{E}|}{|\xrightarrow{B}|}$
\end{squishedlist}

In previous work~\cite{aboulnaga:markov, maduko:md-tree, 
mhedhbi:optimizer}, 
the choice of which of these estimates 
to use has either been unspecified or decided by a heuristic without acknowledging 
possible other choices or empirically justifying these choices.
As our first main contribution, we systematically describe a space of heuristics 
for making an estimate for optimistic estimators
and show empirically that the better performing heuristics depend on the structure of the query. 
We show that on acyclic queries and queries with small-size cycles, whose statistics
are available, using the \emph{maximum-weight path} through the CEG is an effective way to make 
accurate estimations. We observe that as in the relational setting, estimators that use independence assumptions
tend to underestimate the true cardinalities on these queries, and the use of maximum-weight 
path in the CEG can offset these underestimations.
In contrast we observe that on queries that contain larger cycles, the optimistic estimators estimate
modified versions of these queries that break these cycles into paths, which leads to overestimations. We 
show that on such queries, using the minimum weight paths leads to generally more accurate estimations
than other heuristics.  
However, we also 
observe that these estimates can still be highly inaccurate, and address this shortcoming by defining a new 
CEG for these queries, which we call $CEG_{OCR}$, for {\bf O}ptimistic {\bf C}ycle closing {\bf R}ate,
that use as edge weights statistics that estimate the probabilities that two edges that 
are connected through a path can close into a cycle. 
We further show that in general considering only the 
``long-hop'' paths, i.e., with more number of edges, that make more independence 
assumption but by conditioning on larger sub-queries
performs better than paths with fewer edges and comparably to considering every path.

\begin{figure}[t!]
	\centering
	\captionsetup{justification=centering}
    	\includegraphics[width=0.8\columnwidth]{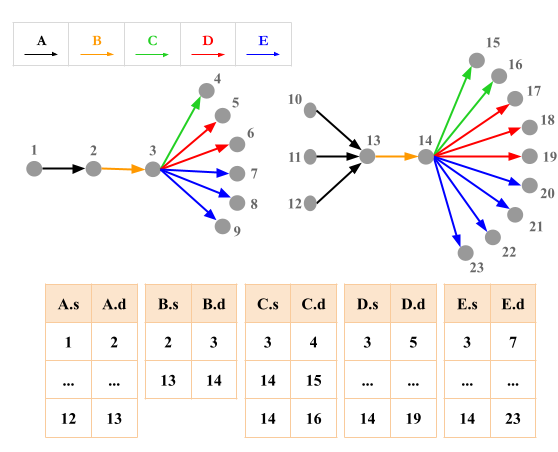}
	\vspace{-15pt}
  \caption{Example dataset in graph and relational formats.}
	\label{fig:running-dataset}
		\vspace{-10pt}
\end{figure}

As our second main contribution, we show that CEGs are expressive enough to model also the recent 
linear program-based pessimistic estimators. 
Specifically, we show that we can take $CEG_O$ and 
replace its edge weights (which are average degrees) 
with maximum degrees of base relations and small-size joins, and construct a new
CEG, which we call $CEG_M$, for {\bf M}OLP bound from reference~\cite{joglekar:degree}.
We show that each path in $CEG_M$ is guaranteed to be an overestimate for the cardinality of the query, 
therefore picking the minimum weight path on this CEG would be the most accurate estimate. 
We show that this path is indeed equivalent to the MOLP pessimistic estimator
from reference~\cite{joglekar:degree}. 
The ability to model pessimistic estimators as CEGs 
allows us to make several contributions. First, we connect 
two seemingly disparate classes of estimators: both subgraph summary-based 
optimistic estimators and the recent LP-based pessimistic ones can be modeled as different 
instances of estimators that pick paths through CEGs. 
Second, this result sheds light into the nature of the arguably opaque LPs that define pessimistic estimators.
Specifically we show that in addition to their numeric representations, the pessimistic estimators
have a combinatorial representation. 
Third, 
we show that a {\em bound sketch} optimization from the recent pessimistic estimator from references~\cite{cai:pessimistic} can
be directly applied to any estimator using a CEG, specifically to the optimistic estimators,
and empirically demonstrate its benefits in some settings.

CEGs further turn out to be very useful mathematical tools to prove certain properties 
of the pessimistic estimators, which may be of independent interest to readers 
who are interested in the theory of pessimistic estimators. 
Specifically using CEGs in our proofs, we show that MOLP can be simplified 
because some of its constraints are unnecessary,  provide several alternative combinatorial proofs to
some known properties of MOLP, such as
the theorem that MOLP is tighter than another bound called DBPLP~\cite{joglekar:degree},
and that MOLP is at least as tight as 
the pessimistic estimator proposed by Cai et al~\cite{cai:pessimistic} and
are identical on acyclic queries over binary relations.

\begin{figure}[t!]
	\centering
	\captionsetup{justification=centering}
    	\includegraphics[width=\columnwidth]{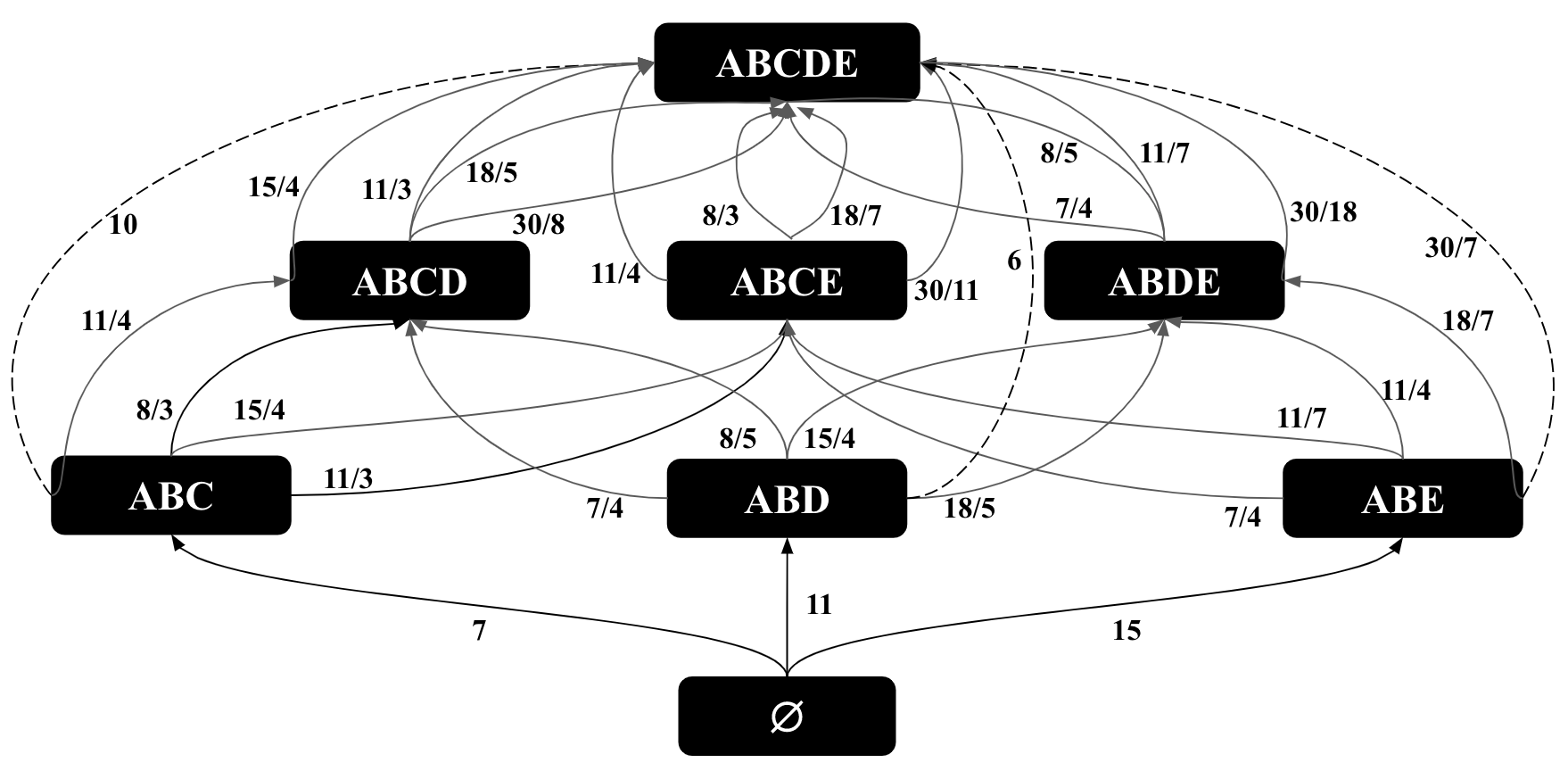}
	\vspace{-10pt}
        \caption{$CEG_{O}$ for query $Q_{5f}$ in Figure~\ref{fig:running-query} when the Markov table (\S\ref{sec:optimistic-estimators})
          contains joins up to size 3. Only a subset of the vertices and edges of the full CEG is shown.}
	\label{fig:mt-jg-h3}
	\vspace{-10pt}
\end{figure}

The remainder of this paper is structured as follows. Section~\ref{sec:running-ex} 
provides our query and database notation. 
Section~\ref{sec:cel} gives an overvi-ew of generic estimators that can be seen as
picking paths from a CEG. Section~\ref{sec:optimistic-estimators} reviews optimistic
estimators that can be modeled with $CEG_{O}$ and outlines the space of
possible heuristics for making estimates using $CEG_{O}$. We also discuss the shortcoming
of $CEG_O$ when estimating queries with large cycles and present $CEG_{OCR}$ to
address this shortcoming. 
Section~\ref{sec:pessimistic-estimators} reviews the pessimistic estimators,
the $CEG_M$ of the MOLP estimator and the bound sketch refinement 
to pessimistic estimators. 
Using $CEG_M$, we prove several properties of MOLP and connect 
some of these pessimistic estimators. 
Section~\ref{sec:evaluation} presents extensive experiments evaluating our space of
optimistic estimators both on $CEG_O$, and $CEG_{OCR}$ and the benefits of bound sketch optimization. 
We compare the optimistic estimators against other summary-based and 
sampling-based techniques and evaluate the effects of our estimators on plan quality on an
actual system. We also confirm as in reference~\cite{park:gcare} that the pessimistic estimators in our 
settings are also not competitive and lead to highly inaccurate estimates.
Finally, Sections~\ref{sec:rw} and~\ref{sec:fw} cover related work and conclude, respectively.

\section{Query and Database Notation}
\label{sec:running-ex}
We consider conjunctive queries of the form
\[
Q(\mathcal{A}) = R_1(\mathcal{A}_1),\ldots,R_m(\mathcal{A}_m)
\]
where $R_i(\mathcal{A}_i)$ is a relation with attributes $\mathcal{A}_i$ and
$\mathcal{A}=\cup_i\mathcal{A}_i$.
Most of the examples used in this paper involve edge-labeled subgraph queries,
in which case each $R_i$ is a binary relation containing a subset of the edges in
a graph as source/destination pairs.
Figure~\ref{fig:running-dataset} presents an example showing a graph
with edge labels $A$, $B$, $C$, $D$, and $E$, shown in black, orange,
green, red, and blue.   This graph can be represented using five binary
relations, one for each of the edge labels.   These relations are
also shown in Figure~\ref{fig:running-dataset}.

We will often represent queries over such relations
using a graph notation.
For example, consider the relations $A$ and $B$
from Figure~\ref{fig:running-dataset}.
We will represent the query
$Q(a_1,a_2,a_3) = A(a_1, a_2) \bowtie B(a_2, a_3)$ as 
$a_1\xrightarrow{A}a_2\xrightarrow{B}a_3$.
Similarly, the query
$Q(a_1,a_2,a_3) = A(a_1, a_2) \bowtie B(a_3, a_2)$
will be represented as $a_1\xrightarrow{A}a_2\xleftarrow{B}a_3$.

\section{CEG Overview}
\label{sec:cel}

Next, we offer some intution for {\em cardinality estimation graphs}
(CEGs).
In Sections~\ref{sec:optimistic-estimators}
and~\ref{sec:pessimistic-estimators} we will define specific CEGs
corresponding to different classes of optimistic and pessimistic estimators.
However, all of these share a common structure for representing
cardinality estimations.
Specifically, a CEG for a query $Q$ will consist of:
\begin{squishedlist}
\item Vertices labeled with subqueries of $Q$, where subqueries are defined
	by subsets of $Q$'s relations or attributes. 
\item Edges from smaller subqueries to  larger subqueries, labeled
  with {\em extension rates} which represent the
  cardinality of the larger subquery relative to that of the smaller subquery.
\end{squishedlist}

\begin{figure}[t!]
	\centering
	\captionsetup{justification=centering}
    	\includegraphics[width=\columnwidth]{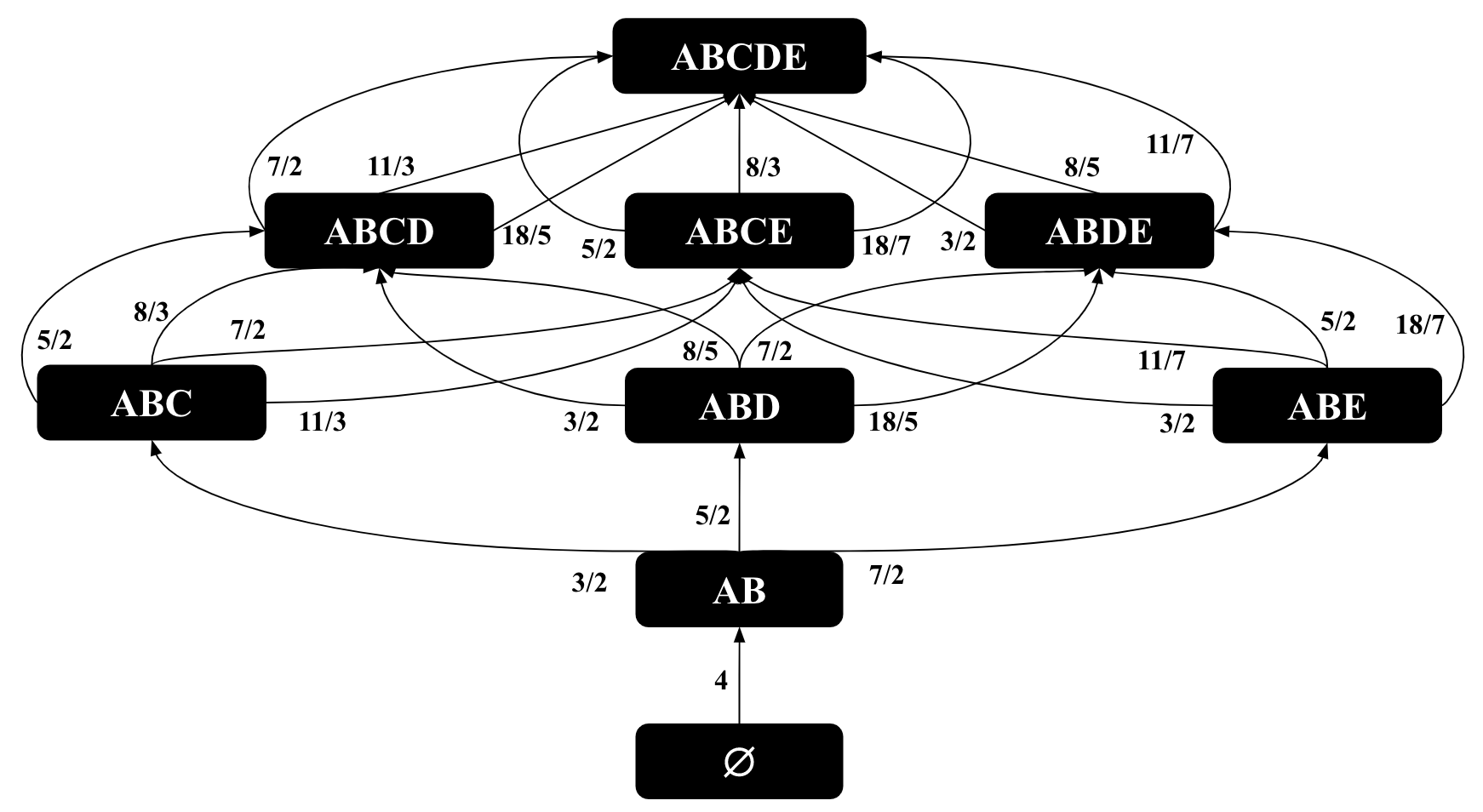}
	\vspace{-10pt}
  \caption{$CEG_{O}$ for query $Q_{5f}$ from Figure~\ref{fig:running-query}.
         The figure shows a subset of the vertices and edges of the full CEG.}
	\label{fig:opt-cel-h2}
	\vspace{-10pt}
\end{figure}
\noindent Each bottom-to-top path (from $\emptyset$ to $Q$) in a CEG represents a different way
of generating a cardinality estimate for $Q$. An estimator using a CEG picks one of these paths
as an estimate.  The estimate of a path is the product of the 
extension rates along the edges of the path. Equivalently one can put
the logarithms of the extension rates as edge weights and sum the
logarithms (and exponentiate the base of the logarithm) to compute the estimate.

Figure~\ref{fig:opt-cel-h2} illustrates
a $CEG$\footnote{Specifically, it is a $CEG_{O}$, defined in Section~\ref{sec:optimistic-estimators}.}
for  the query
$Q_{5f}$ shown in Figure~\ref{fig:running-query}
over the relations shown in Figure~\ref{fig:running-dataset}, assuming that
statistics are available for any size-2 subqueries of $Q_{5f}$. 
For example, the leftmost path starts with 
$a_1\xrightarrow{A}a_2\xrightarrow{B}a_3$, then
extends to $a_1\xrightarrow{A}a_2\xrightarrow{B}a_3\xrightarrow{C}a_4$, 
then to the subquery of 4-fork involving $A$, $B$, $C$, and $D$, and finally
extends the 4-fork subquery to $Q_{5f}$. The first extension rate
from $\emptyset$ to  $a_1\xrightarrow{A}a_2\xrightarrow{B}a_3$ is simply
the known cardinality of $a_1\xrightarrow{A}a_2\xrightarrow{B}a_3$, which is 
4, and the second extension rate makes the uniformity assumption of 
$|a_2\xrightarrow{B}a_3\xrightarrow{C}a_4|/|a_2\xrightarrow{B}a_3|$ $=\frac{3}{2}$.
The final estimate of this path is $4\times\frac{3}{2}\times\frac{5}{2}\times\frac{7}{2}=52.5$.

In the rest of this paper,
we will show how some of the optimistic and pessimistic estimators from literature 
can be modeled as instances of this generic estimator using different CEGs.
We will show that while it is clear that the minimum-weight 
path should be the estimate chosen in the CEG of 
pessimistic estimators, it is
not clear which path should be chosen
from the CEG of optimistic estimators. We will also discuss a shortcoming
of the CEG for optimistic estimators and address it by defining a new CEG.

\section{Optimistic Estimators}
\label{sec:optimistic-estimators}
The estimators that we refer to as {\em optimistic} in this paper use known 
statistics about the input database in formulas that make uniformity and 
independence or conditional independence assumptions. 
The cardinality estimators of several systems fall under this category. We 
focus on three estimators: {\em Markov tables}~\cite{aboulnaga:markov} from 
XML databases, graph summaries~\cite{maduko:md-tree} from RDF 
databases, and the graph catalogue estimator of the Graphflow 
system~\cite{mhedhbi:optimizer} for managing property graphs.  
These estimators are extensions of each other and use the statistics of the cardinalities of small-size joins. 
We give an overview of these estimators and then describe their CEGs, which we will refer to 
as $CEG_{O}$, and then first describe a space of possible optimistic estimates that an optimistic 
estimator can make. We then discuss a shortcoming of the $CEG_O$ when queries contain 
large cycles whose statistics are missing and describe a modification to $CEG_O$ to make more accurate estimations.

\subsection{Overview}
  We begin by giving an overview of the Markov tables estimator~\cite{aboulnaga:markov}, which was used to estimate the cardinalities of paths in XML documents. A Markov table of length $h \ge 2$ stores the cardinality of each path in an XML document's element tree up to length $h$ and uses these to make predications for the cardinalities of longer paths. Table~\ref{table:ex-markov} shows a subset of the entries in an example Markov table for $h=2$ for our running example dataset shown in Figure~\ref{fig:running-dataset}. The formula to estimate a 3-path using a Markov table with $h=2$ is to multiply the cardinality of the leftmost 2-path with the consecutive 2-path divided by the cardinality of the common edge. For example, consider the query $Q_{3p}=\xrightarrow{A}\xrightarrow{B}\xrightarrow{C}$ against the dataset in Figure~\ref{fig:running-dataset}. The formula for $Q_{3p}$ would be: $|\xrightarrow{A}\xrightarrow{B}| \times (|\xrightarrow{B}\xrightarrow{C}|/|\xrightarrow{B}|)$. Observe  that this formula is inspired by the Bayesian probability rule that $Pr(ABC) = Pr(AB)Pr(C|AB)$ but makes a conditional independence assumption between $A$ and $C$, in which case the Bayesian formula would simplify to $Pr(ABC) = Pr(AB)Pr(C|B)$. For $Pr(AB)$ the formula uses the true cardinality $|\xrightarrow{A}\xrightarrow{B}$$|$. For $Pr(C|B)$ the formula makes a uniformity assumption that the number of $C$ edges that each $B$ edge extends to is equal for each $B$ edge and is $r=|\xrightarrow{B}\xrightarrow{C}|/|\xrightarrow{B}|$. Equivalently, this can be seen as an ``average degree'' assumption that 
 on average the $C$-degree of nodes in the $\xrightarrow{B}\xrightarrow{C}$ paths is $r$.
  The result of this formula is $4\times\frac{3}{2}=6$, which underestimates the true cardinality of 7. The graph summaries~\cite{maduko:md-tree} for RDF databases and the graph catalogue estimator~\cite{mhedhbi:optimizer} for property graphs have extended the contents of what is stored in Markov tables, respectively, to other acyclic joins, such as stars, and small cycles, such as triangles, but use the same uniformity and conditional independence assumptions.

\begin{table}[t!]
  \begin{center}
    \label{tab:table1}
    \begin{tabular}{c|c}
      \textbf{Path} & \textbf{|Path|} \\
       \hline
      $\xrightarrow{B}$ & 2\\
       \hline
      $\xrightarrow{A}\xrightarrow{B}$ & 4\\
       \hline
       $\xrightarrow{B}\xrightarrow{C}$ & 3\\
       \hline
	... & ...\\
      \hline
    \end{tabular}
  \end{center}
  \vspace{5pt}
	 \caption{Example Markov table for $h$=2.}
	 \label{table:ex-markov}
	\vspace{-15pt}
\end{table}

\subsection{Space of Possible Optimistic Estimators}
\label{subsec:opt-space}
We next represent such estimators using a CEG that we call $CEG_{O}$. This will help us describe the space of possible estimations that can be made with these estimators. We assume that the given query $Q$ is connected. $CEG_{O}$ consists of the following:

\begin{squishedlist}
\item {\em Vertices:} For each connected subset of relations $S \subseteq \R$ of $Q$, we have a vertex in $CEG_{O}$ with label $S$. This represents the sub-query $\bowtie_{R_i \in S} R_i$. 

\item {\em Edges:} Consider two vertices with labels $S$ and $S'$ s.t., $S \subset S'$. Let $\D$, for {\bf d}ifference be $S' \setminus S$, and let $\E \supset \D$, for {\bf e}xtension be a join query in the Markov table, and let $\I$, for {\bf i}ntersection, be $\E \cap S$. If $\E$ and $\I$ exist in the Markov table, then there is an edge with weight $\frac{|\E|}{|\I|}$ from $S$ to $S'$ in $CEG_{O}$. 
\end{squishedlist}

When making estimates, we will apply two basic rules from prior work that limit the paths considered in $CEG_{O}$. First is that if the Markov table contains size-$h$ joins, the formulas use size $h$ joins in the numerators in the formula. For example, if $h=3$, we do not try to estimate the cardinality of a sub-query $\xrightarrow{A}\xrightarrow{B}\xrightarrow{C}$ by a formula $\xrightarrow{A}\xrightarrow{B} \times \frac{\xrightarrow{B}\xrightarrow{C}}{\xrightarrow{B}}$ because we store 
the true cardinality of $\xrightarrow{A}\xrightarrow{B}\xrightarrow{C}$ in the Markov table.
Second, for cyclic graph queries, which was covered in reference~\cite{mhedhbi:optimizer}, an additional early cycle closing rule is used in the reference when generating formulas. In CEG formulation this translates to the rule that if $S$ can extend to multiple $S'$s and some of them contain additional cycles that are not in $S$, then only such outgoing edges of $S$ to such $S'$ are considered in finding paths.

Even when the previous rules are applied to limit the number of paths considered in a CEG, in general there 
may
be multiple $(\emptyset, Q)$ paths that lead to different estimates. Consider the $CEG_O$ shown in Figure~\ref{fig:opt-cel-h2}
which uses a Markov table of size 2. There are 36 $(\emptyset, Q)$ paths leading to 7 different estimates. Two examples are: 
\begin{squishedlist}
\item  $|\xrightarrow{A}\xrightarrow{B}| \times \frac{|\xrightarrow{B}\xrightarrow{C}|}{|\xrightarrow{B}|} \times \frac{|\xrightarrow{B}\xrightarrow{D}|}{|\xrightarrow{B}|} \times \frac{|\xrightarrow{B}\xrightarrow{E}|}{|\xrightarrow{B}|} = 52.5$
\item  $|\xrightarrow{A}\xrightarrow{B}| \times \frac{|\xrightarrow{B}\xrightarrow{C}|}{|\xrightarrow{B}|} \times \frac{|\xleftarrow{C}\xrightarrow{D}|}{|\xrightarrow{C}|} \times \frac{|\xleftarrow{D}\xrightarrow{E}|}{|\xrightarrow{D}|} = 57.6$
\end{squishedlist}

Similarly, consider the fork query $Q_{5f}$ in Figure~\ref{fig:running-query} and a Markov table with 
up to 3-size joins. The $CEG_O$ of $Q_{5f}$ is shown in Figure~\ref{fig:mt-jg-h3}, which contains multiple paths 
leading to 2 different estimates:

\begin{squishedlist}
\item  $|\xrightarrow{A}\xrightarrow{B}\xrightarrow{C}| \times \frac{|\xleftarrow{C}\xrightrightarrows[E]{D}|}{|\xrightarrow{C}|}$
\item  $|\xrightarrow{A}\xrightarrow{B}\xrightarrow{C}| \times \frac{|\xrightarrow{A}\xrightarrow{B}\xrightarrow{D}|}{|\xrightarrow{A}\xrightarrow{B}|} \times \frac{|\xrightarrow{A}\xrightarrow{B}\xrightarrow{E}|}{|\xrightarrow{A}\xrightarrow{B}|}$
\end{squishedlist}

Both formulas start by using
$|\xrightarrow{A}\xrightarrow{B}\xrightarrow{C}|$. Then, the first
``short-hop'' formula makes one fewer conditional independence
assumption than the ``long-hop'' formula, which is an advantage. In
contrast, the first estimate also makes a uniformity assumption that
conditions on a smaller-size join. 
We can expect this assumption to be
less accurate than the two assumptions made in the long-hop estimate,
which condition on 2-size joins. 
In general, these two formulas can
lead to different estimates.

For many queries, there can be many more than 2 different estimates. 
Therefore, any optimistic estimator implementation needs to make choices about which formulas to use, which corresponds to picking paths in $CEG_{O}$.
Prior optimistic estimators have either left these choices unspecified or described procedures that implicitly pick 
a specific path yet without acknowledging possible other choices or empirically justifying their choice.
Instead, we systematically identify a space of choices that an optimistic estimator can make along two parameters that also capture the choices made in prior work:
\begin{squishedlist}

\item {\em Path length:} The estimator can identify a set of paths to consider based on the path lengths, i.e., number of edges or hops, in $CEG_{O}$, which can be: (i) maximum-hop (\texttt{max-hop}); (ii) minimum-hop (\texttt{min-hop}); or (iii) any number of hops (\texttt{all-hops}). 

\item {\em Estimate aggregator:} Among the set of paths that are considered, each path gives an estimate. The estimator then has to aggregate these estimates to derive a final estimate, for which we identify three heuristics: (i) the largest estimated cardinality path (\texttt{max-aggr}); (ii) the lowest estimated cardinality path (\texttt{min-aggr}); or (iii) the average of the estimates among all paths (\texttt{avg-aggr}).

\end{squishedlist}

\noindent Any combination of these two choices can be used to design an optimistic estimator. 
The original Markov tables~\cite{aboulnaga:markov} chose the \texttt{max-hop} heuristic. In this work, each query was a path, so when the first heuristic is fixed, any path in $CEG_{O}$ leads to the same estimate. Therefore an aggregator is not needed. Graph summaries~\cite{maduko:md-tree} uses the \texttt{min-hop} heuristic and leaves the aggregator unspecified. Finally, graph catalogue~\cite{mhedhbi:optimizer} picks the \texttt{min-hop} and \texttt{min-aggr} aggregator. We will
do a systematic empirical analysis of this space of estimators in Section~\ref{sec:evaluation}.
In fact, we will show that which heuristic to use depends on the structure of the query. For example,
on acyclic queries unlike the choice in reference~\cite{mhedhbi:optimizer}, 
systems can combat the well known underestimation problem of optimistic estimators by picking the `pessimistic' paths, so using \texttt{max-aggr} heuristic. Similarly, we find that using the \texttt{max-hop} heuristic leads generally to highly accurate estimates.

\subsection{$CEG_{OCR}$: Handling Large Cyclic Patterns}
Recall that a Markov table stores the cardinalities of patterns up to some size $h$.
Given a Markov table with $h \geq 2$, optimistic estimators can produce estimates
for any acyclic query with size larger than $h$.   But what about large \emph{cyclic} queries, i.e.,
cyclic queries with size larger than $h$?

Faced with a large cyclic query $Q$ , the optimistic estimators we have described do not actually
produce estimates for $Q$.   Instead, they produce an estimate for a similar acyclic $Q'$ that includes
all of $Q$'s edges but is not closed.
To illustrate this, consider estimating a 4-cycle query in Figure~\ref{fig:cq} using a Markov table with $h$=$3$. Recall that the estimates of prior optimistic estimators are captured by paths in the $CEG_O$ for 4-cycle query, shown in Figure~\ref{fig:square_ceg_o}. Consider as an example, the left most path, which corresponds to the formula: $|\xrightarrow{A}\xrightarrow{B}\xrightarrow{C}| \times |\xrightarrow{B}\xrightarrow{C}\xrightarrow{D}| / |\xrightarrow{B}\xrightarrow{C}|$. 
Note that this formula is in fact estimating a 4-path $\xrightarrow{A}\xrightarrow{B}\xrightarrow{C}\xrightarrow{D}$ rather than the 4-cycle shown in Figure~\ref{fig:cq}.
This is true for each path in $CEG_O$.

More generally, when queries contain cycles of length $> h$, $CEG_O$
breaks cycles in queries into paths. Although optimistic estimators generally tend to underestimate acyclic queries due to the independence assumptions they make, the situation is different for cyclic queries. Since there are often significantly more paths than cycles, estimates over $CEG_O$ can lead to highly inaccurate \emph{over}estimates. 
We note that this problem does not exist if a query contains a cycle $C$ of length $>h$ that contains smaller 
cycles in them, such as a clique of size $h+1$,
because the early cycle closing rule from Section~\ref{subsec:opt-space} will avoid formulas that estimate $C$ as a sub-query (i.e., no $(\emptyset, C)$ sub-path will exist in CEG).

\begin{figure}
  \includegraphics[width=0.5\columnwidth]{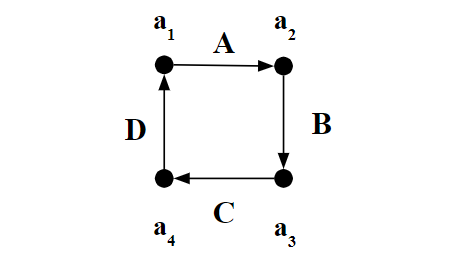}
  \vspace{-3mm}
  \caption{An example of a 4-cycle cyclic query}
  \label{fig:cq}
\end{figure}

\begin{figure*}
  \centering
  \begin{subfigure}{0.49\textwidth}
    \includegraphics[width=\textwidth]{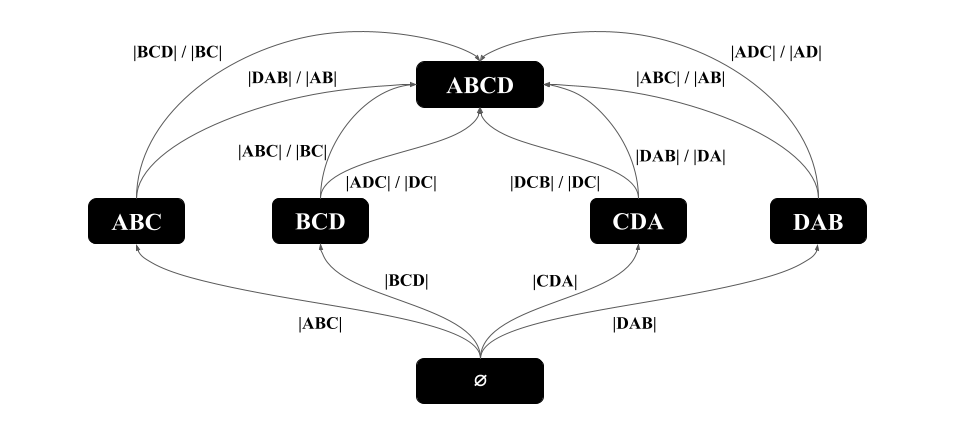}
    \caption{$CEG_O$}
    \label{fig:square_ceg_o}
  \end{subfigure}
  \begin{subfigure}{0.49\textwidth}
    \includegraphics[width=\textwidth]{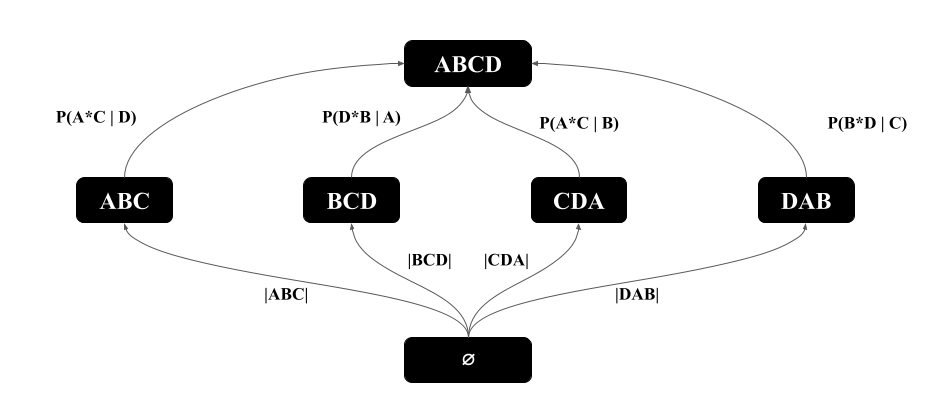}
    \caption{$CEG_{OCR}$}
    \label{fig:square_ceg_ocr}
  \end{subfigure}
  \caption{$CEG_O$ and $CEG_{OCR}$ of Figure ~\ref{fig:cq}.}
\end{figure*}

We next describe an alternative modified CEG to address this problem. 
Consider a query $Q$ with a $k$-cycle $C$ where $k>h$. 
Note that in order to not break cycles into paths, we need CEG edges whose weights capture 
the cycle closing effect when extending a sub-query $S$ that contains $k$$-$$1$ edges of $C$ to a sub-query $S'$
that contains $C$.  
We capture this in a new CEG called $CEG_{OCR}$, for {\bf O}ptimistic {\bf C}ycle closing {\bf R}ate,
which modifies $CEG_O$ as follows. We keep the same vertices as in $CEG_O$ and the same set of edges,
except when we detect two vertices $S$ and $S'$ with the above property. Then, instead of using the weights from
the original $CEG_O$ between $S$ and $S'$, we use pre-computed  cycle closing probabilities.
Suppose the last edge that closes the cycle $C$ is $E_i$ and it is between the query edges $E_{i-1}$ and $E_{i+1}$. 
In the Markov table, we store the probabilities of two connected $E_{i-1}$ and $E_{i+1}$ edges to be connected by an additional $E_i$ edge to close a cycle. We denote this statistic as $P(E_{i-1} * E_{i+1} | E_i)$. We can compute $P(E_{i-1} * E_{i+1} | E_i)$ by 
computing all paths that start from $E_{i-1}$ and end with $E_{i+1}$ of varying lengths and then counting the number of $E_i$ 
edges that close such paths to cycles. 
On many datasets there may be a prohibitively many such paths, so we can sample $p$ of such paths. 
Suppose these $p$ paths lead to $c$ many cycles, then we can take the probability as $c/p$. 
In our implementation we perform sampling through random walks that start from $E_{i-1}$ and end at $E_{i+1}$ but other sampling strategies can also be employed.  
Figure~\ref{fig:square_ceg_ocr} provides the $CEG_{OCR}$ for the 4-cycle query in Figure~\ref{fig:cq}.
We note that the Markov table
for $CEG_{OCR}$ requires computing additional $P(E_{i-1} * E_{i+1} | E_i)$ statistics 
that $CEG_O$ does not require.  
The number of entries is at most $O(L^3)$ where $L$ is the number of edge labels in the dataset. 
For many datasets, e.g., all of the ones we used in our evaluations, $L$ is small and in the order of 10s or 100s,
so even in the worst case these entries can be stored in several MBs. In contrast, storing 
large cycles with $h$$>$$3$ edges could potentially require $\Theta(L^h)$ more entries.

\section{Pessimistic Estimators}
\label{sec:pessimistic-estimators}
Join cardinality estimation is directly related to the following fundamental question: 
Given a query $Q$ and set of statistics over the relations $R_i$, such as their cardinalities or degree information about values in different columns, what is the
worst-case output size of $Q$? Starting from the seminal result by Atserias, Grohe, 
and Marx in 2008~\cite{atserias:agm}, several upper bounds have been provided to 
this question under different known statistics. For example the initial upper bound from reference~\cite{atserias:agm}, now called the {\em AGM bound}, used only the 
cardinalities of each relation, while later bounds, DBPLP~\cite{joglekar:degree}, 
MOLP~\cite{joglekar:degree}, and CLLP~\cite{khamis:cllp} used maximum 
degrees of the values in the columns and improved the AGM bound.
Since these bounds are upper bounds on the query size, they can be used 
as {\em pessimistic estimators}. This was done recently by Cai et al.~\cite{cai:pessimistic} in 
an actual estimator implementation. We refer to this as the CBS estimator, 
after the names of the authors. 

In this section, we show a surprising connection between the optimistic estimators from Section~\ref{sec:optimistic-estimators} and the recent pessimistic estimators~\cite{joglekar:degree, cai:pessimistic}. Specifically,
in Section~\ref{subsec:molp}, we show that similar to optimistic estimators, MOLP (and CBS) can also be modeled as an estimator using a CEG. 
CEGs further turn out to be useful mathematical tools to prove properties of pessimistic estimators.
We next show applications of CEGs is several of our proofs to
obtain several theoretical results that provide insights into the pessimistic estimators.  
Section~\ref{subsec:wbs} reviews the CBS estimator. 
Using our CEG framework,  we show that in fact the CBS estimator is 
equivalent to the MOLP bound on acyclic queries on which it was evaluated in 
reference~\cite{cai:pessimistic}. 
In Section~\ref{subsec:wbs}, we also review the bound sketch refinement of the 
CBS estimator from reference~\cite{cai:pessimistic}, which we show can also be applied
to any estimator using a CEG, specifically the optimistic ones we cover in this paper.
Finally, 
\iflongversion
Appendix~\ref{app:dbp}
\else
the longer version of our paper~\cite{chen:ceg-tr}
\fi
reviews 
the DBPLP bound and provides 
an alternative proof  that MOLP is tighter than DBPLP that also uses CEGs in the proof.

\subsection{MOLP}
\label{subsec:molp}

MOLP was defined in reference~\cite{joglekar:degree} as a 
tighter bound than the AGM bound that uses additional degree 
statistics about input relations  that AGM bound does not use.
We first review the formal notion of a degree. Let $\mathcal{X}$ be a subset 
of the attributes $\mathcal{A}_i$ of
some relation $\R_i$, and let $v$ be a possible value of $\mathcal{X}$.
The {\em degree} of $v$ in $\R_i$ is the number of times $v$ occurs in
$\R_i$, i.e. $deg(\mathcal{X}(v), \R_i) =|\{t \in \R_i | \pi_{\mathcal{X}}(t) =
v\}|$.
For example, in Figure~\ref{fig:running-dataset}, $deg(s(3), E)=3$
because the outgoing $E$-degree of vertex 3 is 3.
Similarly $deg(d(2), A)$ is 1 because the incoming $A$-degree of
vertex 2 is 1.
We also define $deg(\mathcal{X}, \R_i)$ to be the maximum degree in $\R_i$ 
of any value $v$ over $X$, i.e., $deg(X, \R_i)=\max_v deg(\mathcal{X}(v), \R_i)$.
So, $deg(d, A)=3$ because vertex 13 has 3 incoming $A$ edges, which is the maximum
A-in-degree in the dataset.
The notion of degree can be generalized to $deg(X(v), Y, \R_i)$, which 
refers to the  ``degree of a value $v$ over attributes $X$ in $\pi_Y \R_i$'', which 
counts the number of times $v$ occurs in $\pi_Y(\R_i)$. Similarly, we let 
$deg(X, Y, \R_i)=\max_v deg(X(v), Y, \R_i)$. Suppose a system has stored
 $deg(X, Y, \R_i)$ statistics for each possible $\R_i$ and  
 $X \subseteq Y \subseteq \A_i$. MOLP is:

\vspace{-10pt}
\begin{align*}
\begin{split}
& \text{\bf{Maximize\:}} s_{\A} \\
 & s_{\emptyset} = 0 \\
 & s_{X} \le s_{Y}, \: \forall X \subseteq Y \\
 & s_{Y \cup E} \le s_{X \cup E} \text{\small{+}} \log(deg(X, Y, \R_i)), \forall X, Y, E \text{$\subseteq$} \A, X \text{$\subseteq$} Y \text{$\subseteq$} \A_i
\end{split}
\end{align*}
The base of the logarithm can be any constant and we take it as 2.
Let $m_A$ be the optimal value of MOLP. Reference~\cite{joglekar:degree} has shown that $2^{m_A}$ is an upper bound on the size of $Q$. 
For example, in our running example, the optimal value of these inequalities is 96, which is an overestimate of the true cardinality of 78. It is not easy to directly see the solution of the MOLP on our running example. However, we will next show that we can represent the MOLP bound as the cost of minimum-weight $(\emptyset, Q)$ path in a CEG that we call $CEG_M$.

\noindent {\bf MOLP CEG ($CEG_M$):}
Let $Q_Z$ be the projection of $Q$ onto attributes $Z$, so $Q_Z = \Pi_Z Q$. Each variable $s_Z$ in MOLP represents the maximum size of $Q_Z$, i.e., the tuples in the projection of $Q_Z$ that contribute to the final output.  We next interpret the two sets of inequalities in MOLP:
\begin{squishedlist}
\item {\em Extension Inequalities} $s_{Y \cup E} \le s_{X \cup E} + \log(deg(X, Y, \R_i))$:  These inequalities intuitively indicate the following: each tuple $t_{X \cup E} \in Q_{X \cup E}$ can extend to at most $deg(X, Y, \R_i)$ $Q_{Y \cup E}$ tuples. For example, in our running example, let $X$$=$$\{a_2\}$, $Y$$=$$\{a_2a_3\}$ and $E$$=$$\{a_1\}$. So both $X$ and $Y$ are subsets of $B(a_2, a_3)$. The inequality indicates that each $a_1a_2$ tuple, so an $R_A$ tuple, can extend to at most $deg(\{a_2\}, \{a_2, a_3\}, B(a_2, a_3))$$=$$deg(a_2, B)$ $a_1a_2a_3$ tuples. This is true, because  $deg(a_2, B)$ is the maximum degree of any $a_2$ value in $B$ (in graph terms the maximum degree of any vertex with an outgoing $B$ edge). 
\item {\em Projection Inequalities} $s_{X} \le s_{Y}$ $(\forall X \subseteq Y)$: These indicate that the number of tuples in $Q_X$ is at most the number of $Q_Y$, if $Y$ is a larger subquery.
\end{squishedlist}

With these interpretations we can now construct $CEG_M$.
\begin{squishedlist}
\item {\em Vertices:} For each $X \subseteq \A$ there is a vertex. This represents the subquery $\Pi_{X} Q$.

\item {\em Extension Edges:}
Add an edge with weight $\log(deg(X, Y,$  $\R_i))$ between any $W_1=X \cup E$ and $W_2=Y \cup E$, for which there is $s_{Y \cup E} \le s_{X \cup E} + \log(deg(X, Y, \R_i))$ inequality. Note that there can be multiple edges between $W_1$ and $W_2$ corresponding to inequalities from different relations.

\item {\em Projection Edges:} $\forall X \subseteq Y$, add an edge with weight 0 from $Y$ to $X$. These edges correspond to projection inequalities and intuitively indicate that, in the worst-case instances, $\Pi_Y Q$ is always as large as $\Pi_X Q$. 
\end{squishedlist}

Figure~\ref{fig:running-molp-ceg} shows the $CEG_M$ of our running example. For simplicity, we use actual degrees instead of their logarithms as edge weights and omit the projection edges in the figure. Below we use $(\emptyset, \A)$ instead of  $(\emptyset, Q)$, to represent the 
bottom-to-top paths in  $CEG_M$  because $\Pi_{\A}Q=Q$. 
\begin{figure}[t!]
	\centering
	\captionsetup{justification=centering}
    	\includegraphics[width=\columnwidth]{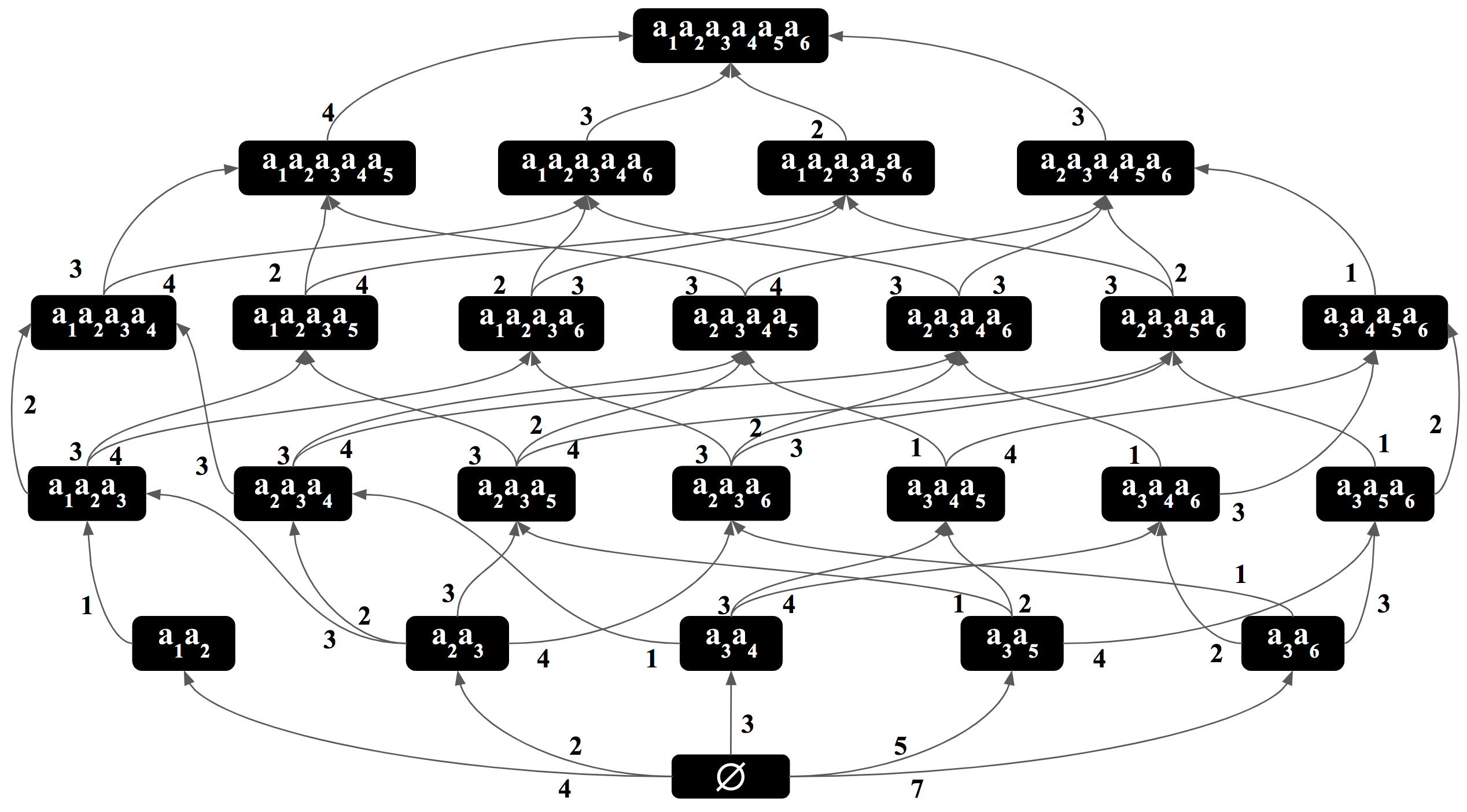}
  \caption{$CEG_M$ for query $Q_{5f}$ in Figure~\ref{fig:running-query}.}
  \label{fig:running-molp-ceg}
  \vspace{-15pt}
\end{figure}

\begin{theorem}
\label{thm:mo-sp}
Let $Q$ be a query whose degree statistics $deg(X, Y, R_i)$ for each $\R_i$ and $X \subseteq Y \subseteq \A_i$ is known. The optimal solution $m_A$ to the MOLP of $Q$ is equal to the weight of the minimum-weight $(\emptyset, \A)$ path in CEG$_{M}$.
\end{theorem}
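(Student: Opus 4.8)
The plan is to recognize MOLP as a system of \emph{difference constraints} and then invoke the classical equivalence between such linear programs and shortest-path distances, observing that the constraint graph of MOLP is exactly $CEG_M$. Concretely, I would first rewrite every MOLP inequality in the normalized form $s_b - s_a \le w$: a projection inequality $s_X \le s_Y$ becomes $s_X - s_Y \le 0$, and an extension inequality $s_{Y \cup E} \le s_{X \cup E} + \log(deg(X, Y, \R_i))$ becomes $s_{Y \cup E} - s_{X \cup E} \le \log(deg(X, Y, \R_i))$. The standard dictionary then associates to each constraint $s_b - s_a \le w$ a directed edge $a \to b$ of weight $w$. The key observation is that this dictionary produces \emph{precisely} the edge set of $CEG_M$: the projection constraint gives the weight-$0$ edge from $Y$ to $X$, and the extension constraint gives the weight-$\log(deg(X, Y, \R_i))$ edge from $X \cup E$ to $Y \cup E$, matching the two edge families in the definition of $CEG_M$ (including the multi-edges arising from different relations $\R_i$). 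Thus solving MOLP, i.e.\ maximizing $s_{\A}$ subject to $s_{\emptyset} = 0$ and all difference constraints, is literally the LP-to-shortest-path problem on $CEG_M$ with source $\emptyset$ and target $\A$.

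Before proving the two bounds, I would check that shortest paths are well defined, which amounts to ruling out negative-weight cycles. Every edge weight in $CEG_M$ is nonnegative: projection edges have weight $0$, and each extension weight is $\log(deg(X, Y, \R_i)) \ge 0$ because a nonempty relation has maximum degree at least $1$. Hence $CEG_M$ has no negative cycle, the distance $\delta(\emptyset, Z)$ from $\emptyset$ to each vertex $Z$ is well defined and finite, and in particular $\delta(\emptyset, \emptyset) = 0$.

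With this in place, I would establish the equality by two inequalities. For $m_A \le \min_P w(P)$, take any feasible $s$ and any $(\emptyset, \A)$ path $P = (\emptyset = v_0, v_1, \ldots, v_k = \A)$; each edge $v_i \to v_{i+1}$ carries a constraint $s_{v_{i+1}} - s_{v_i} \le w_i$, and summing these telescopes to $s_{\A} - s_{\emptyset} \le \sum_i w_i = w(P)$, so $s_{\A} \le w(P)$ using $s_{\emptyset} = 0$. Minimizing over $P$ and over feasible $s$ gives $m_A \le \min_P w(P)$. For the reverse, I would exhibit the explicit feasible solution $s_Z := \delta(\emptyset, Z)$. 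It satisfies $s_{\emptyset} = 0$, and for every edge $a \to b$ of weight $w$ the triangle inequality $\delta(\emptyset, b) \le \delta(\emptyset, a) + w$ is exactly the corresponding constraint $s_b - s_a \le w$, so $s$ is feasible; since $s_{\A} = \delta(\emptyset, \A) = \min_P w(P)$, we obtain $m_A \ge \min_P w(P)$. Combining the two inequalities yields $m_A = \min_P w(P)$, the weight of the minimum-weight $(\emptyset, \A)$ path.

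The argument is essentially the textbook duality between difference-constraint systems and shortest paths, so the only real work is verifying that the correspondence is tight rather than merely suggestive. The step I expect to require the most care is confirming that the constraint-to-edge dictionary is an exact bijection onto the edges of $CEG_M$ as defined---matching edge directions, weights, and the multiplicity of edges between the same pair of vertices---together with the (easy but essential) nonnegativity check, since it is the nonnegativity of the weights that guarantees $\delta(\emptyset, \emptyset) = 0$ and hence that the shortest-path assignment honors the anchoring equality $s_{\emptyset} = 0$.
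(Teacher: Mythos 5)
Your proposal is correct and follows essentially the same route as the paper's proof: both directions match (summing/telescoping the constraints along a path to bound any feasible $s_{\A}$ by any path weight, which the paper does via an equivalent induction-by-contradiction, and then exhibiting the shortest-path distances as an explicit feasible solution). Your explicit framing as a difference-constraint system and your check that all weights are nonnegative, so that $\delta(\emptyset,\emptyset)=0$ and the anchoring constraint $s_{\emptyset}=0$ is honored, is a small tidiness improvement over the paper, which leaves that point implicit.
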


\begin{proof}
Our proof consists of two steps. First we show that any feasible solution $v$ to MOLP has a value at most the weight of any  $(\emptyset, \A)$ path. Then we show that a particular feasible solution, which we call $v_{CEG}$, is exactly the weight of the minimum-weight $(\emptyset, \A)$ path.
Let $v$ be a feasible solution to the $MOLP_Q$. We refer to the value of $v$, so the value $s_{\A}$ takes in $v$, simply as $s_{\A}$. Let $P$ be any $(\emptyset, \A)$ path in $CEG_{M}$. Let $w(P)$ be the weight of $P$. Suppose w.l.o.g. that $P$$=$$(\emptyset)$$\xrightarrow{e_0}$$(E_1)$...$(E_k)$$\xrightarrow{e_k}$$(E_{k+1}=\A)$ and for the purpose of contradiction that $w(P)=w(e_0)+....+w(e_k)$$<$$s_{\A}$. If this is the case, we can inductively (from $i$$=$$k+1$ down to $0$) show that $w(e_0)+....+w(e_{i-1}) < s_{E_i}$. The base case for $s_{E_{k+1}}$ holds by our assumption. Suppose $w(e_0)+....+w(e_{i}) < s_{E_{i+1}}$ by induction hypothesis. Then consider the inequality in $MOLP_Q$ that corresponds to the $(E_i)$$\xrightarrow{e_{i}}$$(E_{i+1})$ edge $e_i$. There are two possible cases for this inequality:

\noindent Case 1: $e_i$ is a projection edge, so $w(e_{i})=0$ and we have an inequality of $s_{E_{i+1}}$$\le$$s_{E_i}$, so $w(e_0)+....+w(e_{i}) < s_{E_i+1} \le s_{E_{i}}$, so $w(e_0)+....+w(e_{i-1}) < s_{E_{i}}$. 

\noindent Case 2: $e_i$ is an extension edge, so we have an inequality of $s_{E_{i+1}}$$\le$$s_{E_i} + w(e_{i})$, so $w(e_0)+....+w(e_{i}) < s_{E_i+1} \le s_{E_{i}} + w(e_{i})$, so 
$w(e_0)+....+w(e_{i}) < s_{E_{i}}$, completing the inductive proof. However, this implies that $0 < s_{\emptyset}$, which contradicts the first inequality of MOLP, completing the proof that any feasible solution $v$ to the MOLP is at most the weight of any $(\emptyset, \A)$ path in $CEG_M$. 

Next, let $v_{CEG}$ be an assignment of variables that sets each $s_X$ to the weight of the minimum-weight $(\emptyset, X)$ path in $CEG_M$. Let $v_X$ be the value of $s_X$ in  $v_{CEG}$. We show that $v_{CEG}$ is a feasible solution to $MOLP_Q$. First, note that in $v_{CEG}$ $s_{\emptyset}$ is assigned a value of 0, so the first inequality of MOLP holds. Second, consider any extension inequality $s_{Y \cup E} \le s_{X \cup E} + \log(deg(X, Y, \R_i))$, so $CEG_M$ contains an edge from $X \cup E$ to $Y \cup E$ with weight $\log(deg(X, Y, \R_i))$. By definition of minimum-weight paths, $v_{Y \cup E} \le v_{X \cup E} + \log(deg(X, Y, \R_i))$. Therefore, in $v_{CEG}$ all of the extension inequalities hold. Finally, consider a projection inequality  $s_{X} \le s_{Y}$, where $X \subseteq Y$, so $CEG_M$ contains an edge from vertex $Y$ to vertex $X$ with weight 0. By definition of minimum-weight paths, $v_X \le v_Y + 0$, so all of these inequalities also hold. Therefore, $v_{CEG}$ is indeed a feasible solution to $MOLP_Q$. Since any solution to MOLP has a value smaller than the weight of any path in $CEG_M$, we can conclude that $v_{\A}$ in $v_{CEG}$, which is the minimum-weight $(\emptyset, \A)$ path, is equal to $m_A$.
\end{proof}

With this connection, readers can verify that the MOLP bound in our running example is 96 by inspecting the paths in Figure~\ref{fig:running-molp-ceg}. In this CEG, the minimum-weight $(\emptyset, \A)$ path has a length of 96 (specifically $\log_2(96)$), corresponding to the leftmost path in Figure~\ref{fig:running-molp-ceg}. We make three observations. 

\noindent {\em Observation 1:}
Similar to the CEGs for optimistic estimators, each $(\emptyset, \A)$ path in $CEG_M$ corresponds to a sequence of extensions from $\emptyset$ to $Q$ and is an estimate of the cardinality of $Q$. 
For example, the rightmost path in Figure~\ref{fig:running-molp-ceg} indicates that there are $7$ 
$a_3a_6$'s (so $a_3\xrightarrow{E}a_6$ edges), each of which extends to at most 3 $a_3a_5a_6$'s (so $a_5\xleftarrow{D}a_3\xrightarrow{E}a_6$ edges), and so forth. This path yields 7x3x2x1x3=126 many possible outputs. Since we are using maximum degrees on the edge weights, each $(\emptyset, \A)$ path is by construction an upper bound on $Q$. So any path in $CEG_M$ is a pessimistic estimator. This is an alternative simple proof to the following proposition:

\begin{proposition}[Prop. 2~\cite{joglekar:degree}] 
\label{prop:mo-out}
Let $Q$ be a join query and $OUT$ be the output size of $Q$, then $OUT$$\le$$2^{m_{\A}}$.\footnote{This is a slight variant of Prop. 2 from reference~\cite{joglekar:degree}, which state that another bound, called the {\em MO bound}, which adds a preprocessing step to MOLP, is an upper bound of $OUT$.}
\end{proposition}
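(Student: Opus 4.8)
The plan is to leverage Theorem~\ref{thm:mo-sp}, which already identifies $m_{\A}$ with the weight of the minimum-weight $(\emptyset, \A)$ path in $CEG_M$. Thus it suffices to show that for \emph{every} $(\emptyset, \A)$ path $P$ in $CEG_M$ we have $\OUT \le 2^{w(P)}$; applying this to the minimum-weight path and invoking the theorem then yields $\OUT \le 2^{m_{\A}}$. This converts an analytic statement about the LP optimum into a purely combinatorial claim about a single path, which is the main payoff of the CEG viewpoint and the reason the argument is simpler than a direct LP-duality proof.

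The heart of the argument is a per-edge cardinality bound. Writing $P = (\emptyset)\xrightarrow{e_0}(E_1)\cdots(E_k)\xrightarrow{e_k}(E_{k+1}=\A)$ as in the proof of Theorem~\ref{thm:mo-sp} and letting $Q_{E_i}=\Pi_{E_i}Q$, I would prove the single-step inequality $|Q_{E_{i+1}}| \le |Q_{E_i}| \cdot 2^{w(e_i)}$ for each edge, splitting into the two edge types. For a projection edge (from some $Y$ down to $X \subseteq Y$, weight $0$) this is immediate, since projecting onto fewer attributes cannot increase the number of distinct tuples, giving $|Q_{E_{i+1}}| \le |Q_{E_i}| = |Q_{E_i}|\cdot 2^0$. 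For an extension edge from $X \cup E$ to $Y \cup E$ with weight $\log(deg(X,Y,\R_i))$, I would argue that each tuple $t \in Q_{X \cup E}$ extends to at most $deg(X,Y,\R_i)$ tuples of $Q_{Y \cup E}$: because $Y \subseteq \A_i$, the values of the new attributes $Y \setminus X$ in any extension must appear in $\R_i$ together with the already-fixed $X$-value of $t$, and by definition there are at most $deg(X,Y,\R_i)$ such $Y$-values for a fixed $X$-value. Summing over the at most $|Q_{X \cup E}|$ source tuples gives $|Q_{Y \cup E}| \le |Q_{X \cup E}| \cdot deg(X,Y,\R_i) = |Q_{E_i}|\cdot 2^{w(e_i)}$.

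With the single-step bound in hand, I would chain it multiplicatively along $P$ by a straightforward induction from $E_0 = \emptyset$ to $E_{k+1} = \A$, using the base fact that $|Q_{\emptyset}| \le 1$ (the projection onto no attributes is at most the single empty tuple). This yields $\OUT = |Q_{\A}| \le |Q_{\emptyset}| \cdot 2^{w(e_0)+\cdots+w(e_k)} \le 2^{w(P)}$. Choosing $P$ to be the minimum-weight path and applying Theorem~\ref{thm:mo-sp} then completes the proof.

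I expect the main obstacle to be the extension-edge step: one must argue carefully that the degree statistic $deg(X,Y,\R_i)$, which is a property of the single relation $\R_i$, really does bound the fan-out of \emph{projected query} tuples from $Q_{X \cup E}$ to $Q_{Y \cup E}$, and not merely of tuples of $\R_i$ itself. The key observation making this go through is that every tuple of $Q$, hence of any projection of $Q$, is consistent with $\R_i$ on the attributes $\A_i$, so the $X$-to-$Y$ fan-out within $Q$ is no larger than within $\R_i$; the spectator attributes in $E$ are held fixed throughout and play no role in the count. Everything else is routine bookkeeping.
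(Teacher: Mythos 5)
Your proposal is correct and follows essentially the same route as the paper: the paper's proof also combines Theorem~\ref{thm:mo-sp} with the observation that every $(\emptyset,\A)$ path in $CEG_M$ is itself an upper bound on $\OUT$ (stated in the text as Observation~1, where the per-edge fan-out interpretation of the extension inequalities is given). You simply spell out the per-edge bounds and the multiplicative chaining that the paper leaves as a ``by construction'' remark, which is a faithful elaboration rather than a different argument.
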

\begin{proof}
Since for any ($\emptyset$, $\A$) path $P$ in $CEG_M$, $OUT$ $\le$ $2^{w(P)}$ and by Theorem~\ref{thm:mo-sp}, $m_{\A}$ is equal to the weight of the minimum-weight ($\emptyset$, $\A$) path in $CEG_M$, $OUT$ $\le$ $2^{m_{\A}}$.
\end{proof}

\noindent {\em Observation 2:} Theorem~\ref{thm:mo-sp} implies that MOLP can be solved using a combinatorial algorithm, e.g., Dijkstra's algorithm, instead of a numeric LP solver. 

\noindent{{\em Observation 3:} Theorem~\ref{thm:mo-sp} implies that we can simplify MOLP by removing the projection inequalities, which correspond to the edges with weight 0 in $CEG_M$. To observe this, 
consider any $(\emptyset, \A)$ path $P$$=$$(\emptyset)$$\xrightarrow{e_0}$$(E_1)$...$(E_k)$$\xrightarrow{e_k}$$(\A)$ and consider its first projection edge, say $e_i$. In 
\iflongversion
Appendix~\ref{app:projection},
\else
the long version of our paper~\cite{chen:ceg-tr},
\fi
we show that we can remove $e_i$ and construct
an alternative path with at most the same weight as $P$ but with one fewer projection edge, showing
that MOLP linear program can be simplified by only using the extension inequalities.

\vspace{-5pt}
\subsubsection{Using Degree Statistics of Small-Size Joins}
MOLP can directly integrate the degree statistics from results of small-size joins. For example, if a system knows the size of $Q_{RS}=R(a_1, a_2) \bowtie S(a_2, a_3)$, then the MOLP can include the inequality that $s_{a_1a_2a_3} \le \log (|Q_{RS}|)$. Similarly, the extension inequalities can use the degree information from $Q_{RS}$ simply by taking the output of $Q_{RS}$ as an additional relation in the query with three attributes $a_1$, $a_2$, and $a_3$. 
When comparing the accuracy of the MOLP bound with optimistic estimators, we will ensure that MOLP uses the degree information of the same small-size joins as optimistic estimators, ensuring that MOLP uses a strict superset of the statistics used by optimistic estimators.

\vspace{-5pt}
\subsection{CBS and Bound Sketch Optimization}
\label{subsec:wbs}
We review the CBS estimator very briefly and refer the reader to reference~\cite{cai:pessimistic} for details. CBS estimator has two subroutines {\em Bound Formula Generator (BFG)} and {\em Feasible Coverage Generator (FCG)} (Algorithms  	1 and 2 in reference~\cite{cai:pessimistic}) that, given a query $Q$ and the degree statistics about $Q$, generate a set of bounding formulas. A coverage is a mapping ($R_j$, $A_j)$ of a subset of the relations in the query to attributes such that each $A_j \in \A$ appears in the mapping. A bounding formula is a multiplication of the known degree statistics that can be used as an upper bound on the size of a query. In 
\iflongversion
Appendix~\ref{app:wbs-molp}, 
\else
the long version of our paper~\cite{chen:ceg-tr},
\fi
we show using our CEG framework that in fact the MOLP bound is at least as tight as the CBS estimator on general acyclic queries and is exactly equal to the CBS estimator over acyclic queries over binary relations, which are the 
queries which reference~\cite{cai:pessimistic} used.
Therefore BFG and FCG can be seen as a method for solving the MOLP linear program on acyclic queries over binary relations, albeit in a brute force manner by enumerating all paths in $CEG_M$. We do this by showing that each path in $CEG_M$ corresponds to a bounding formula and vice versa. These observations allow us to connect two worst-case upper bounds from literature using CEGs. Henceforth, we do not differentiate between MOLP and the CBS estimator.
It is important to note that a similar connection between MOLP and CBS cannot be established for cyclic queries.
This is because, although not explicitly acknowledged in reference~\cite{cai:pessimistic}, on cyclic queries, the covers that FCG generates may not be safe, i.e., the output of BFG may not be a pessimistic output. 
We provide a counter example in 
\iflongversion
Appendix~\ref{app:wbs-cyclic}.
\else
 the long version of our paper~\cite{chen:ceg-tr}.
 \fi
In contrast, MOLP generates a pessimistic estimate for arbitrary, so both acyclic or cyclic, queries. 

\subsubsection{Bound Sketch}
We next review an optimization that was described in reference~\cite{cai:pessimistic} 
to improve the MOLP bound. 
Given a partitioning budget $K$, for each bottom-to-top path in CEG$_M$, the 
optimization partitions the input relations into multiple pieces and derives $K$ many subqueries of $Q$.
Then the estimate for $Q$ is the sum of estimates of all $K$ subqueries. Intuitively, partitioning
decreases the maximum degrees in subqueries to yield better estimates, specifically
their sum is guaranteed to be more accurate than making a direct estimate for $Q$.  
We give an overview of the optimization 
here and refer the reader to reference~\cite{cai:pessimistic} for details. 

We divide the edges in $CEG_M$ into two. Recall that each edge $W_1 \xrightarrow{e_j} W_{2}$ in $CEG_M$ is constructed from an inequality of $s_{Y \cup E} \le s_{X \cup E} + \log(deg(X, Y, R_i))$ in MOLP. We call $e_j$ (i) an unbound edge if $X=\emptyset$, i.e., the weight of $e_j$ is $|R_i|$; (ii) a bound edge if $X \neq \emptyset$, i.e., the weight of $e_j$ is actually the degree of some value in a column of $R_i$. Note that unbound edge extends $W_1$ exactly with attributes $\A_i$, i.e., $W_2 \setminus W_1 = \A_i$ and a bound edge with attributes $Y$, i.e., $W_2 \setminus W_1 = Y$. Below, we refer to these attributes as ``extension'' attributes.

\noindent {\bf Step1:} For each $p=(\emptyset, \A)$ path in $CEG_M$ (so a bounding formula in the terminology used in reference~\cite{cai:pessimistic}), let $S$ be the join attributes that are not extension attributes through a bounded edge. For each attribute in $S$, allocate $K^{1/|S|}$ partitions. For example, consider the path $P_1$$=$$\emptyset$$\xrightarrow{|B|}a_2a_3$$\xrightarrow{deg(a_3, C)}$$a_{2-4}$$\xrightarrow{deg(a_2, A)}$$a_{1-4}$$\xrightarrow{deg(a_3, E)}$$a_{1-4}a_6$$\linebreak \xrightarrow{deg(a_3, D)}$$a_{1-6}$ in the $CEG_M$ of $Q_{5f}$ from Figure~\ref{fig:running-molp-ceg}, where $a_{i-j}$ refers to $a_ia_{i+1}...a_j$. Then both $a_2$ and $a_3$ would be in $S$. For path $P_2=\emptyset\xrightarrow{|A|}a_1a_2\xrightarrow{deg(a_2, B)}a_{1-3}\xrightarrow{deg(a_3, C)}a_{1-4}\xrightarrow{deg(a_3, D)}a_{1-5}\xrightarrow{deg(a_3, E)}a_{1-6}$, only $a_2$ would be in in $S$.

\noindent {\bf Step2:} Partition each relation $R_i$ as follows. Let $PA_i$, for {\bf p}artition {\bf a}ttributes, be $PA_i = S \cap \A_i$ and $z$ be $|PA_i|$. Then partition $R_i$ into $K^{z/|S|}$ pieces using $z$ hash functions, each hashing a tuple $t \in R_i$ based on one of the attributes in $PA_i$ into $\{0, ..., K^{1/|S|}-1\}$. For example, the relation $B$ in our example path $P_1$ would be partitioned into 4, $B_{00}$, $B_{01}$, $B_{10}$, and $B_{11}$.

\noindent {\bf Step3:} Then divide $Q$ into $K$ components $Q_{0...0}$, to \linebreak $Q_{K^{1/|S|}-1, ..., K^{1/|S|}-1}$, such that $Q_{j_1, ..., j_z}$ contains only the partitions of each relation $R_i$ that matches the $\{j_1, ..., j_z\}$ indices. For example, in our example, $Q_{0...0}$ is $A_0 \bowtie B_{0,0} \bowtie C_0 \bowtie D_0 \bowtie E_0$. This final partitioning is called the bound sketch of $Q$ for path $p$.

\vspace{-1pt}
\subsubsection{Implementing Bound Sketch in Opt. Estimators}
Note that a bound sketch can be directly used to refine any estimator using a CEG, as it is 
a general technique to partition $Q$ into subqueries based on each path $p$ in a CEG. 
The estimator can then sum the estimates for each subquery to generate an estimate for $Q$.
Specifically, we can use a bound sketch to refine 
optimistic estimators and we will evaluate its benefits in Section~\ref{sec:eval-partitioning}. 
Intuitively, one advantage of using a bound sketch is that the tuples that hash to different 
buckets of the join attributes are guaranteed to not produce outputs and they never appear
in the same subquery. This can make the uniformity assumptions in the optimistic estimators
more accurate because two tuples that hashed to the same bucket of an attribute
are more likely to join.

We implemented the bound sketch optimization for optimistic estimators as follows. Given a partitioning budget $K$ and a set of queries in a workload, we worked backwards from the queries to find the necessary subqueries, and for each subquery the necessary statistics that would be needed are stored in the Markov table. For example, for $Q_{5f}$, one of the formulas that is needed is: $|a_1$$\xrightarrow{A_0}$$a_2$$\xrightarrow{B_{00}}$$a_3|$
$\frac{|a_2\xrightarrow{B_{00}}a_3\xrightarrow{C_0}a_4|}{|a_2\xrightarrow{B_{00}}a_3|}\frac{|a_4\xleftarrow{C_0}a_3\xrightarrow{D_0}a_5|}{|a_3\xrightarrow{C_0}a_4|}\frac{|a_5\xleftarrow{D_0}a_3\xrightarrow{E_0}a_6|}{a_3\xrightarrow{D_0}a_5}$, so we ensure that our
Markov table has these necessary statistics.

\section{Evaluation}
\label{sec:evaluation}
We next present our experiments, which aim to answer five questions: 
(1) Which heuristic out of the space we identified in Section~\ref{subsec:opt-space} leads 
to better accuracy for optimistic estimators, and why? We aim to answer this question for acyclic queries on $CEG_O$
 and cyclic queries on $CEG_O$ and $CEG_{OCR}$.
(2) For cyclic queries, which of these two CEGs lead to more accurate results under their
best performing heuristics?
(3) How much does the bound-sketch optimization improve 
the optimistic estimators' accuracy?
(4) How do optimistic and pessimistic
estimators, which are both summary-based estimators, compare against each other
and other baseline summary-based estimators from the literature?
(5) How does the best-performing optimistic estimator compare 
against Wander Join~\cite{li:wanderjoin, park:gcare}, the state-of-the-art sampling-based estimator?
Finally, as in prior work, we use the RDF-3X system~\cite{neumann:rdf3x} to verify that
our estimators' more accurate estimations lead more performant query plans.

Throughout this section, except in our first experiments, 
where we set $h=3$, we use a Markov table of size $h=2$ for optimistic estimators.
We generated workload-specific Markov tables, which  
required less than 0.6MB memory for any workload-dataset combination for $h=2$ or $h=3$. For
$CEG_{OCR}$, which requires computing the cycle closing rates, the size was slightly higher but at most 0.9MB.
 Our code, datasets, and queries are publicly available at \url{https://github.com/cetechreport/CEExperiments} 
 and \url{ https://github.com/cetechreport/gcare}.

\subsection{Setup, Datasets and Workloads}
For all of our experiments, we use a single machine with two Intel E5-2670 at 2.6GHz CPUs,
each with 8 physical and 16 logical cores, and 512 GB of RAM.
We represent our datasets as labeled graphs and queries as edge-labeled subgraph queries but
our datasets and queries can equivalently be represented as relational tables, one for each edge label, 
and SQL queries. We focused on edge-labeled queries for simplicity. Estimating queries with
vertex labels can be done in a straightforward manner both for optimistic and pessimistic estimators
e.g., by extending Markov table entries to have vertex labels as was done in
reference~\cite{mhedhbi:optimizer}.
We used a total of 6 real-world datasets, shown in Table~\ref{table:datasets}, and 5 workloads on these datasets. 
Our dataset and workload combinations are as follows.

\begin{table}[t!]
  \small
  \begin{center}
    \label{tab:table2}
    \begin{tabular}{|c|c|c|c|c|}
      \hline
      \textbf{Dataset} & \textbf{Domain} & \textbf{|V|} & \textbf{|E|} & \textbf{|E. Labels|} \\
      \hline
       IMDb & Movies & 27M & 65M & 127 \\
       \hline
       YAGO & Knowledge Graph & 13M & 16M & 91 \\
       \hline
       DBLP & Citations & 23M & 56M & 27 \\
       \hline
       WatDiv & Products & 1M & 11M & 86 \\
       \hline
       Hetionet & Social Networks & 45K & 2M & 24 \\
       \hline
       Epinions & Consumer Reviews & 76K & 509K & 50 \\
       \hline
    \end{tabular}
  \end{center}
  \vspace{5pt}
   \caption{Dataset descriptions.}
	 \label{table:datasets}
  \vspace{-20pt}
\end{table}

\noindent {\bf IMDb and JOB:} The IMDb relational database, together with a workload called JOB, has been used for cardinality estimation studies in prior work~\cite{cai:pessimistic, leis:job}. We created property graph versions of the this database and workload as follows.
IMDb contains three groups of tables: (i) {\em entity tables} representing entities, such as actors (e.g., \texttt{name} table), movies, and companies; (ii) {\em relationship tables} representing many-to-many relationships between the entities (e.g., the \texttt{movie\-_companies} table represents relationships between movies and companies); and (iii) {\em type tables}, which denormalize the entity or relationship tables to indicate the types of entities or relationships. 
We converted each row of an entity table to a vertex. We ignored vertex types because many queries in the JOB workload have no predicates on entity types. Let $u$ and $v$ be vertices representing, respectively, rows $r_u$ and $r_v$ from tables $T_u$ an $T_v$. We added two sets of edges between $u$ and $v$: (i) a {\em foreign key edge} from $u$ to $v$ if the primary key of row $r_u$ is a foreign key in row $r_v$; (ii) a {\em relationship edge} between $u$ to $v$ if a row $r_{\ell}$ in a relationship table $T_{\ell}$ connects row $r_u$ and $r_v$. 

We then transformed the JOB workload~\cite{leis:job} into equivalent subgraph queries on our transformed graph.
We removed non-join predicates in the queries since we are focusing on join cardinality estimations, and we encoded equality predicates on types directly on edge labels.
This resulted in 7 join query templates, including four 4-edge queries, two 5-edge queries, and one 6-edge query.
All of these queries are acyclic.
There were also 2- and 3-edge queries, which we ignored because as their estimation is trivial with
Markov tables of size 3. 
We generated 100 query instances from each template by choosing one edge label uniformly at random for
each edge, while ensuring that the output of the query was non-empty. The final workload contained a total
of 369 specific query instances. 

\begin{figure}[t!]
	\centering
	\captionsetup{justification=centering}
    	\includegraphics[width=\columnwidth]{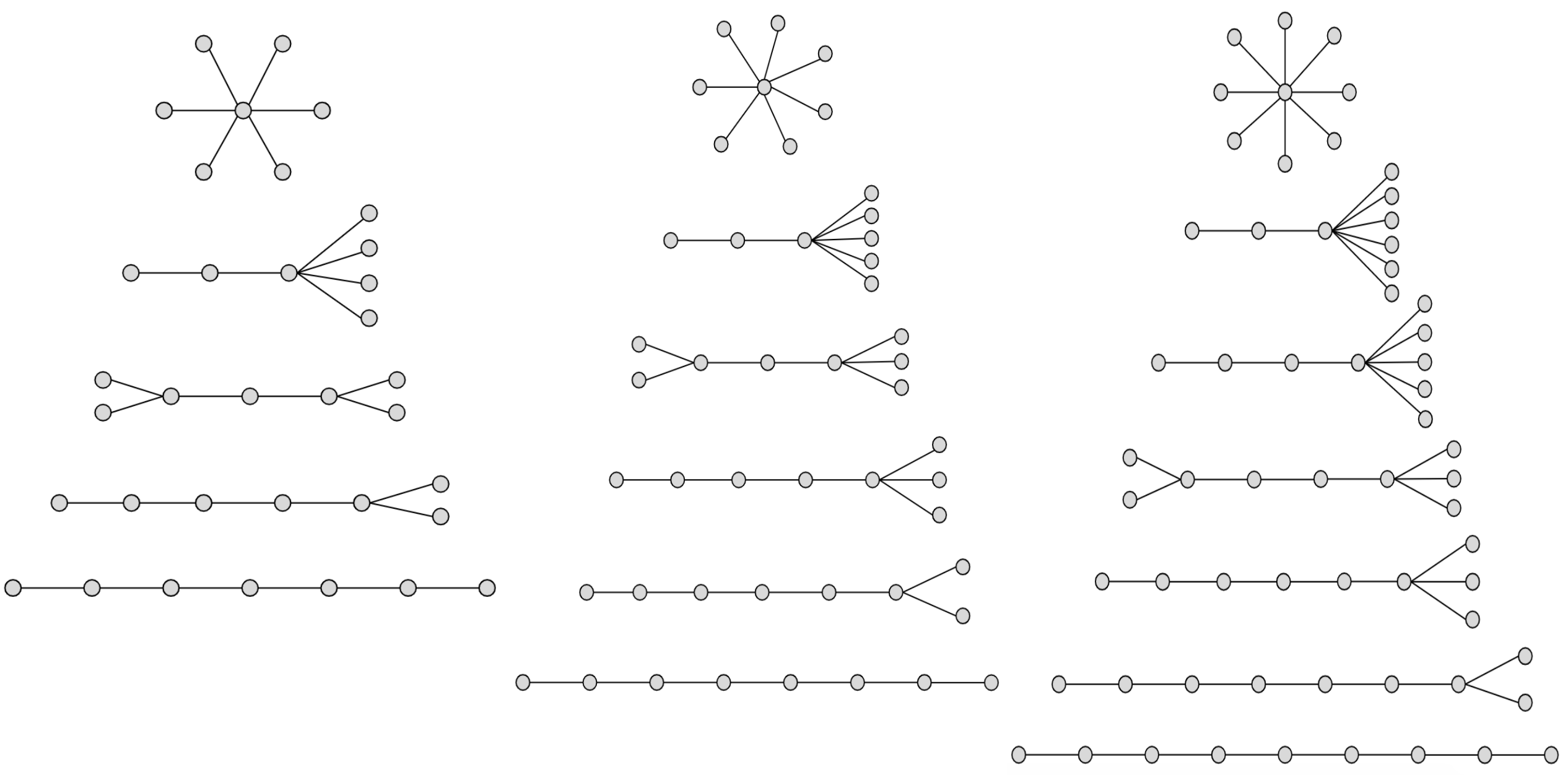}
	\vspace{-10pt}
  \caption{Our full acyclic query templates. The directions of the edges are not shown in the figure.}
	\vspace{-10pt}
	\label{fig:acyc-workload}
\end{figure}

\noindent {\bf YAGO and G-CARE-Acyclic and G-CARE Cyclic Workloads:} 
G-CARE~\cite{park:gcare} is a recent cardinality estimation benchmark for subgraph queries. From this benchmark we took the YAGO knowledge graph dataset and the acyclic and cyclic query
workloads for that dataset.
The acyclic workload contains 382 queries generated from 
query templates with 3-, 6-, 9-, and 12-edge star and path queries, as well as randomly
generated trees. We will refer to this workload as \texttt{G-CARE-Acyclic}. 
The cyclic query workload contains 240 queries generated from templates with 6-, and 9-edge cycle, 6-edge clique, 6-edge flower, and 6- and 9-edge petal queries. We will refer to this workload as \texttt{G-CARE-Cyclic}.
The only other large dataset from G-CARE that was available was LUBM, which contained only 6 queries, 
so we did not include it in our study.

\noindent {\bf DBLP, WatDiv, Hetionet, and Epinions Datasets and Acyclic and Cyclic Workloads:}
We used four other datasets: (i) Hetionet: a biological network; (ii) DBLP: a real knowledge graph; (iii) WatDiv: a synthetic knowlege graph; and (iv) Epinions: a real-world social network graph.
Epinions is a dataset that by default does not have any edge labels. We added a random set of 50 edge labels to Epinions. Our goal in using Epinions was to test whether our experimental observations also hold on a graph that is guranteed to not have any correlations between edge labels.
For these datasets we created one acyclic and one cyclic query workload, which we 
refer to as  \texttt{Acyclic} and  \texttt{Cyclic}.
The \texttt{Acyclic} workload contains queries generated from 6-, 7-, or 8-edge templates.
We ensured that for each query size $k$, we had patterns of every possible depth. 
Specifically for any $k$, the minimum depth of any query is 2 (stars) 
and the maximum is $k$ (paths).  For each depth $d$ in between, we picked a pattern.
These patterns are shown in Figure~\ref{fig:acyc-workload}. 
Then, we generated 20 non-empty instances of each template  by putting one 
edge label uniformly at random on each edge, which yielded 360 queries in total. 
The \texttt{Cyclic} workload contains queries generated from templates used in reference~\cite{mhedhbi:optimizer}: 4-edge cycle, 5-edge diamond with a crossing edge, 6-cycle, complete graph $K_4$, 6-edge query of two triangles with a common vertex, 8-edge query of a square with two triangles on adjacent sides, and a 7-edge query with a square and a triangle.
We then randomly generated instances of these queries by randomly matching each edge of the query template
one at a time in the datasets with a 1 hour time limit for each dataset. 
We generated 70 queries for DBLP, 212 queries for Hetionet, 129 queries for WatDiv, and 
394 queries for Epinions.

\vspace{-3pt}
\subsection{Space of Optimistic Estimators}
We begin by comparing the 9 optimistic estimators we described on the two optimistic CEGs
we defined. In order to set up an experiment in which we could test all of the 9 possible optimistic estimators, we used a Markov table that contained up to 3-size joins (i.e., h=3). A Markov table with only 2-size joins can not test different estimators based on different path-length heuristics or any cyclic query.
We aim to answer four specific questions: (1) Which of the 9 possible optimistic estimators we
outlined in Section~\ref{subsec:opt-space} leads to most accurate estimates on acyclic queries 
and cyclic queries that contain $\le$$3$ edges on $CEG_O$?
(2) Which of the 9 estimators lead to most accurate estimates for cyclic queries with
cycles of size $>3$ on $CEG_O$ and $CEG_{OCR}$? (3) Which of these 
CEGs lead to most accurate estimations under their best performing estimators?
(4) Given the accuracies of best performing estimators, how much room for improvement is there 
for designing more accurate techniques, e.g., heuristics, for making estimates on $CEG_O$ and $CEG_{OCR}$? 

To compare the accuracies of different estimators, 
for each query $Q$ in our workloads we make an estimate using each estimator and compute its q-error. If the true cardinality of $Q$ is $c$ and the estimate is $e$, then the q-error is $\max\{\frac{c}{e},\frac{e}{c}\} \ge 1$. For each workload, this gives us a distribution of q-errors, which we compare as follows. First, we take the logs of the q-errors so they are now $\ge 0$. If a q-error was an underestimate, we put a negative sign to it. This allows us to order the estimates from the least accurate underestimation to the least accurate overestimation. We then generate a box plot where the box represents the 25th, median, and 75th percentile cut-off marks. We also compute the mean of this distribution, excluding the top 10\% of the distribution (ignoring under/over estimations) and draw it with a red dashed line in box plots. 

\begin{figure}
\centering
\captionsetup{justification=centering}
    \includegraphics[width=0.4\columnwidth]{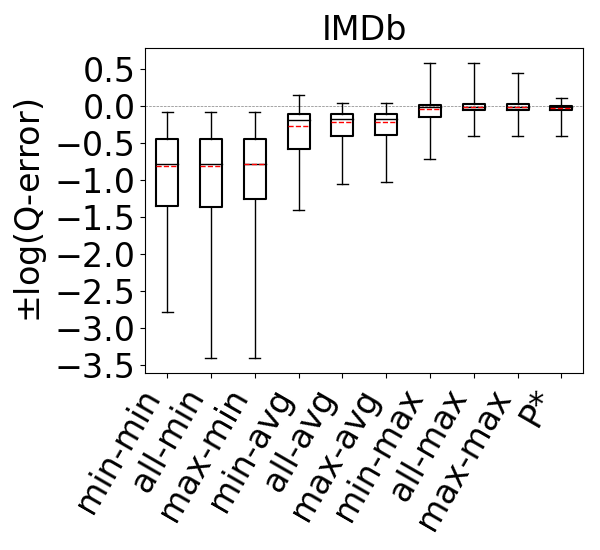}
\captionsetup{justification=centering}
    \includegraphics[width=0.4\columnwidth]{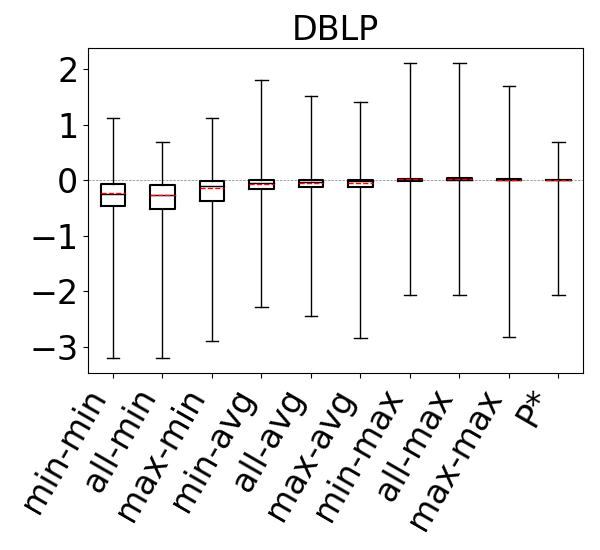}
\captionsetup{justification=centering}
    \includegraphics[width=0.4\columnwidth]{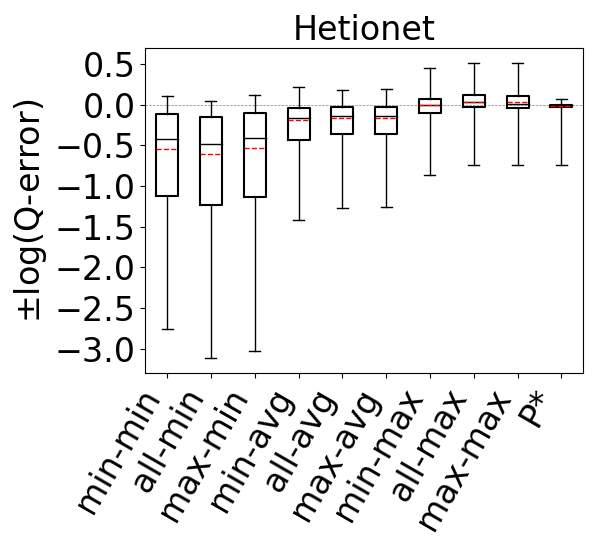}
\captionsetup{justification=centering}
    \includegraphics[width=0.4\columnwidth]{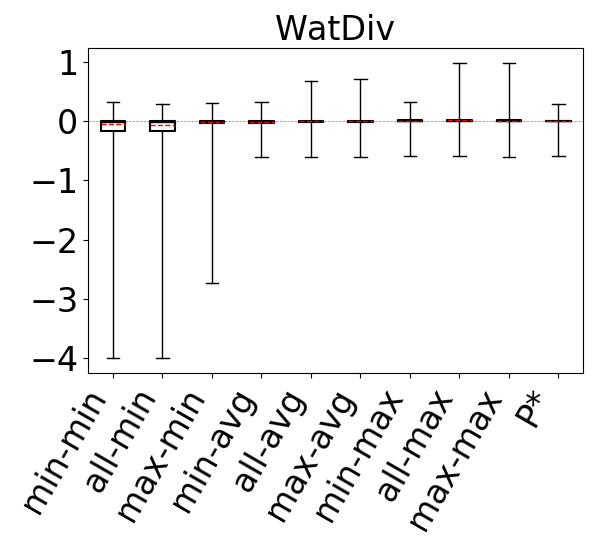}
\captionsetup{justification=centering}
    \includegraphics[width=0.4\columnwidth]{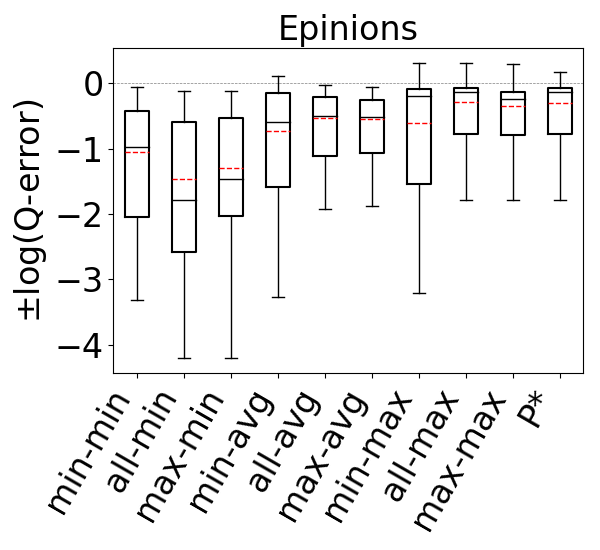}
\captionsetup{justification=centering}
    \includegraphics[width=0.4\columnwidth]{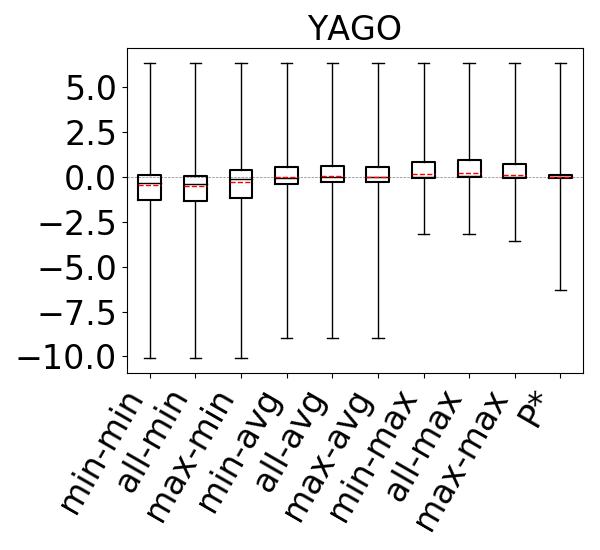}
\vspace{-10pt}
\caption{Evaluation of the optimistic estimators on $CEG_O$ on acyclic queries (JOB or Acyclic workloads).  Estimators are labeled ``X-Y'', where X describes the path length choice (one of
  \underline{max}-hop, \underline{min}-hop, or \underline{all}-hops) and Y describes the estimate aggregator (one of \underline{max}-aggr, \underline{min}-aggr, or \underline{avg}-aggr).}
\vspace{-10pt}
\label{fig:opt-space-acyclic-box}
\end{figure}

\subsubsection{Acyclic Queries and Cyclic Queries With Only Triangles}
\label{subsub:acyclic-and-triangles}
For our first question, we first compare our 9 estimators on $CEG_O$ for each acyclic query workload
  in our setup. We then compare our 9 estimators on each cyclic query workload, but only using
the queries that only contain triangles as cycles. All except one clique-6 query in the GCARE-Cyclic workload contained cycles 
with more than 3 edges, so the YAGO-GCARE-Cyclic combination is omitted.

The accuracies of the 9 estimators on acyclic workloads are shown in Figure~\ref{fig:opt-space-acyclic-box} (ignore the P* column for now).
We make several observations. First, regardless of the path-length heuristic chosen, the \texttt{max} aggregator (the last 3 box plots in the figures) makes significantly more accurate estimates (note that the y-axis on the plots are in log scale) than \texttt{avg}, which in turn is more accurate than \texttt{min}. This is true across all acyclic experiments and all datasets. For example, on IMDb and \texttt{JOB} workload, the \texttt{all-hops-min}, \texttt{all-hops-avg}, and \texttt{all-hops-max} estimators have log of mean q-errors (after removing top 10 percent outliers) of 6.5 (underestimation), 1.7 (underestimation), and 1.02 (understimation), respectively. 
Overall, we observe that using the most pessimistic of the optimistic estimates leads to significantly more accurate estimations in our evaluations than the heuristics used in prior work (up to three orders of magnitude improvement in mean accuracy). 
{\em Therefore on acyclic queries, 
when there are multiple formulas that can be used for estimating a query's cardinality, 
using the pessimistic ones is an effective technique to combat the well known underestimation problem.}

We next analyze the path-length heuristics. Observe that across all experiments, if we ignore the outliers and focus on the 25-75 percentile boxes, \texttt{max-hop} and \texttt{all-hops} do at least as well as \texttt{min-hop}. 
Further observe that on IMDb, Hetionet, and on the \texttt{Acyclic} workload on Epinions, \texttt{max-hop} and \texttt{all-hops} lead to significantly more accurate estimates. Finally, the performance of \texttt{max-hop} and \texttt{all-hops} are comparable across our experiments. We verified that this is because \texttt{all-hops} effectively picks one of the \texttt{max-hop} paths in majority of the queries in our workloads. 
Since \texttt{max-hop} enumerates strictly fewer paths than \texttt{all-hops} to make an estimate, 
we conclude that on acyclic queries, systems implementing the optimistic estimators can prefer 
the \texttt{max-hop-max} estimator.

\begin{figure*}
\centering
\captionsetup{justification=centering}
   \includegraphics[width=0.4\columnwidth]{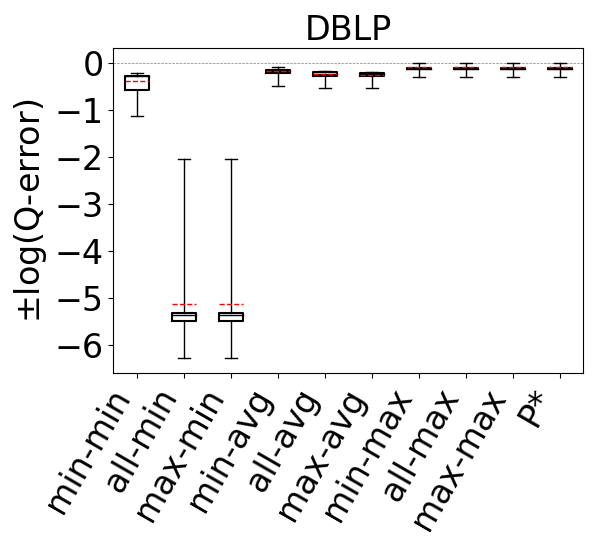}
\captionsetup{justification=centering}
    \includegraphics[width=0.4\columnwidth]{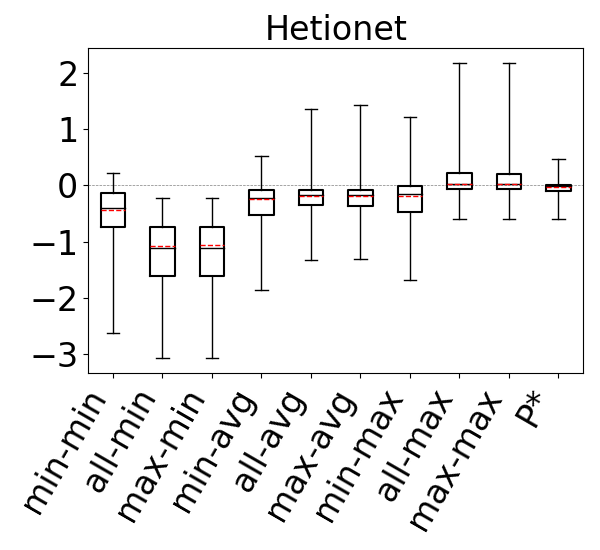}
\captionsetup{justification=centering}
    \includegraphics[width=0.4\columnwidth]{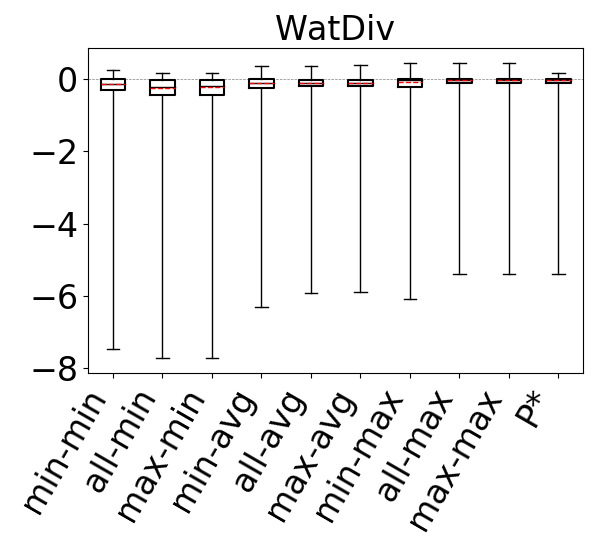}
\captionsetup{justification=centering}
    \includegraphics[width=0.4\columnwidth]{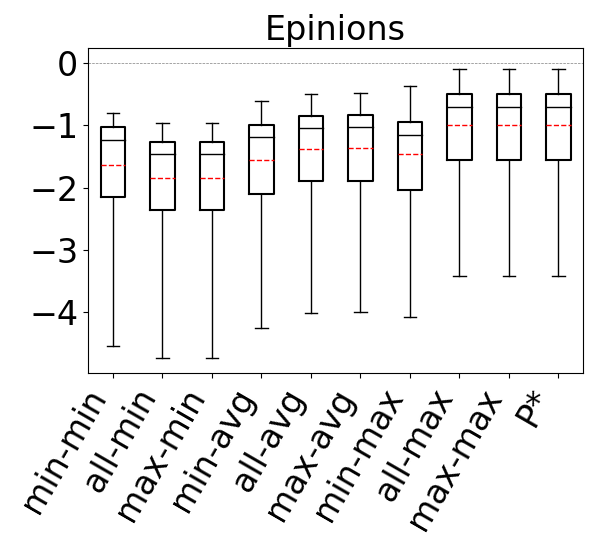}
\vspace{-10pt}
  \caption{Evaluation of the space of optimistic estimators on $CEG_O$ on \texttt{Cyclic} workload on queries with only triangles.}
\vspace{-10pt}
\label{fig:opt-space-cyclic-box-trig}
\end{figure*}

Figure~\ref{fig:opt-space-cyclic-box-trig} shows the accuracies of the 9 estimators on cyclic query workloads with only 
triangles. Our observations are similar to those for acyclic queries, and we find that the \texttt{max} aggregator yields more accurate estimates than other aggregators, irrespective of the path length. This is again because we generally observe that on most datasets
any of the 9 estimators tend to underestimate, and the \texttt{max} aggregator can combat this problem better than
\texttt{min} or \texttt{avg}. When using the max aggregator, we also observe that
the \texttt{max-hop} heuristic performs at least as well as the \texttt{min-hop} heuristic. Therefore, as we
observed
for acyclic queries, we find  \texttt{max-hop-max} estimator to be an effective way to make accurate estimations 
for cyclic queries with only triangles.

For the above experiments, we also performed a query template-specific analysis and verified that our conclusions generally 
hold for each acyclic and cyclic query template we used in our workloads in Figures~\ref{fig:opt-space-acyclic-box} and~\ref{fig:opt-space-cyclic-box-trig}. Our charts in which we evaluate the 9 estimators 
on each query template can be found in our github repo.

\begin{figure*}
\centering
\captionsetup{justification=centering}
    \includegraphics[width=0.5\columnwidth]{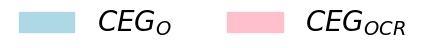}
    \\
\captionsetup{justification=centering}
    \includegraphics[width=0.6\columnwidth]{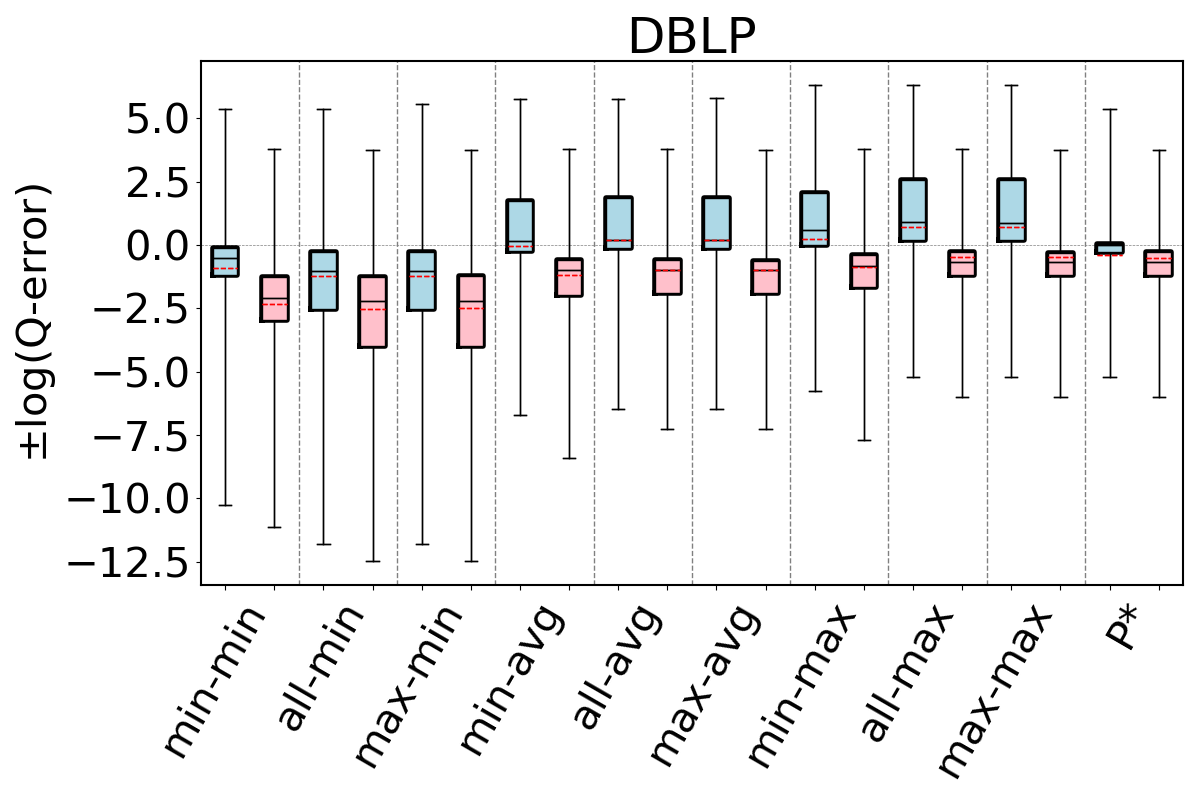}
\captionsetup{justification=centering}
    \includegraphics[width=0.6\columnwidth]{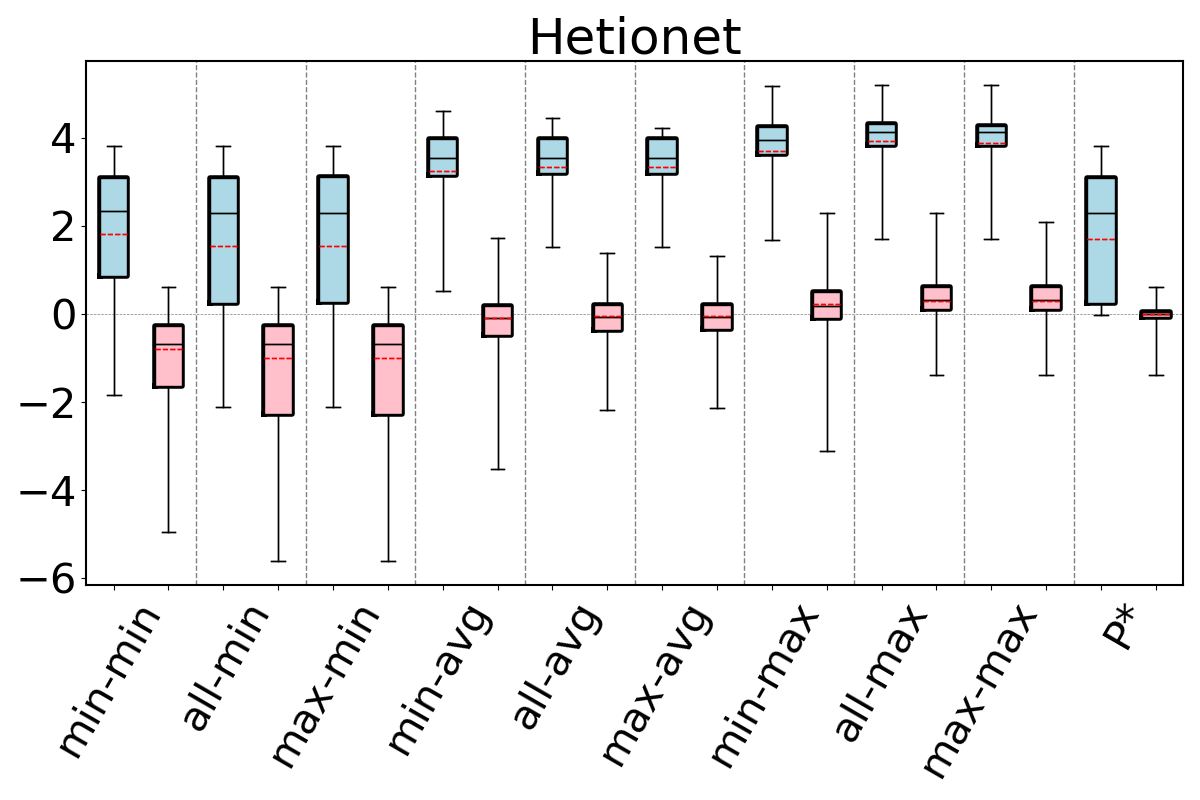}
\captionsetup{justification=centering}
    \includegraphics[width=0.6\columnwidth]{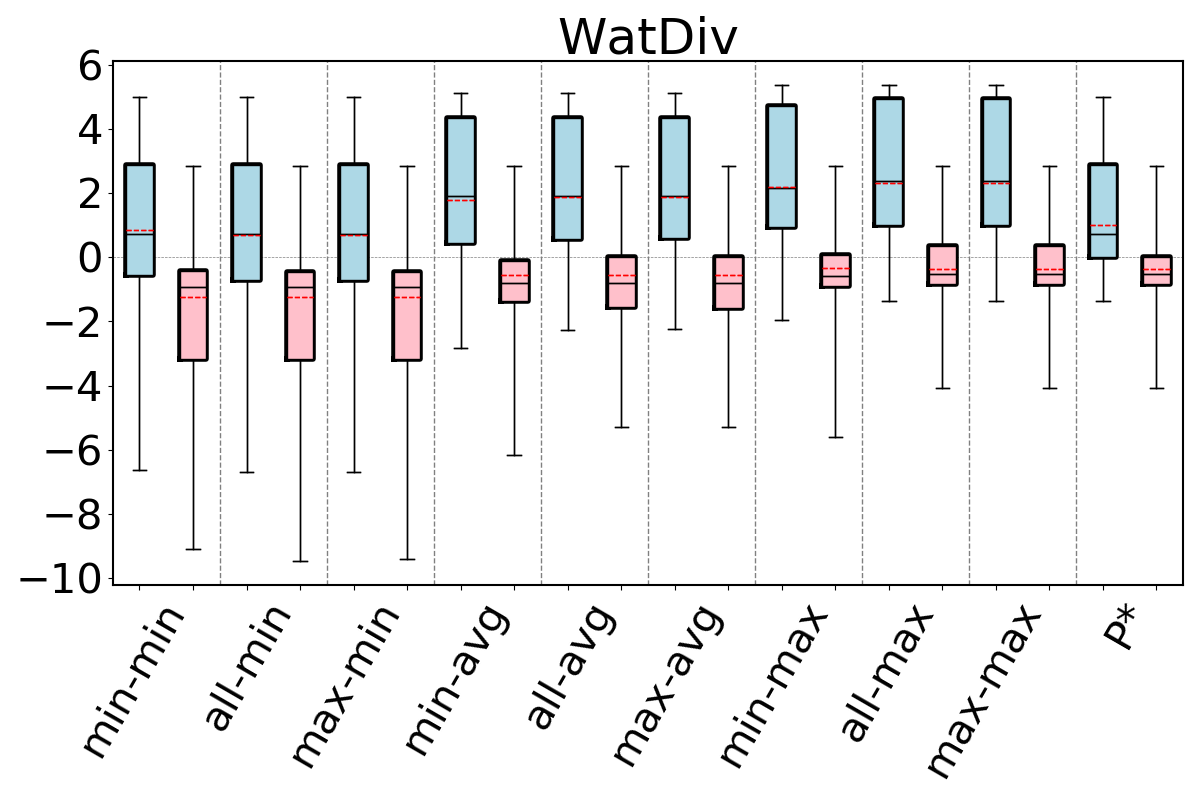}
    \\
\captionsetup{justification=centering}
    \includegraphics[width=0.6\columnwidth]{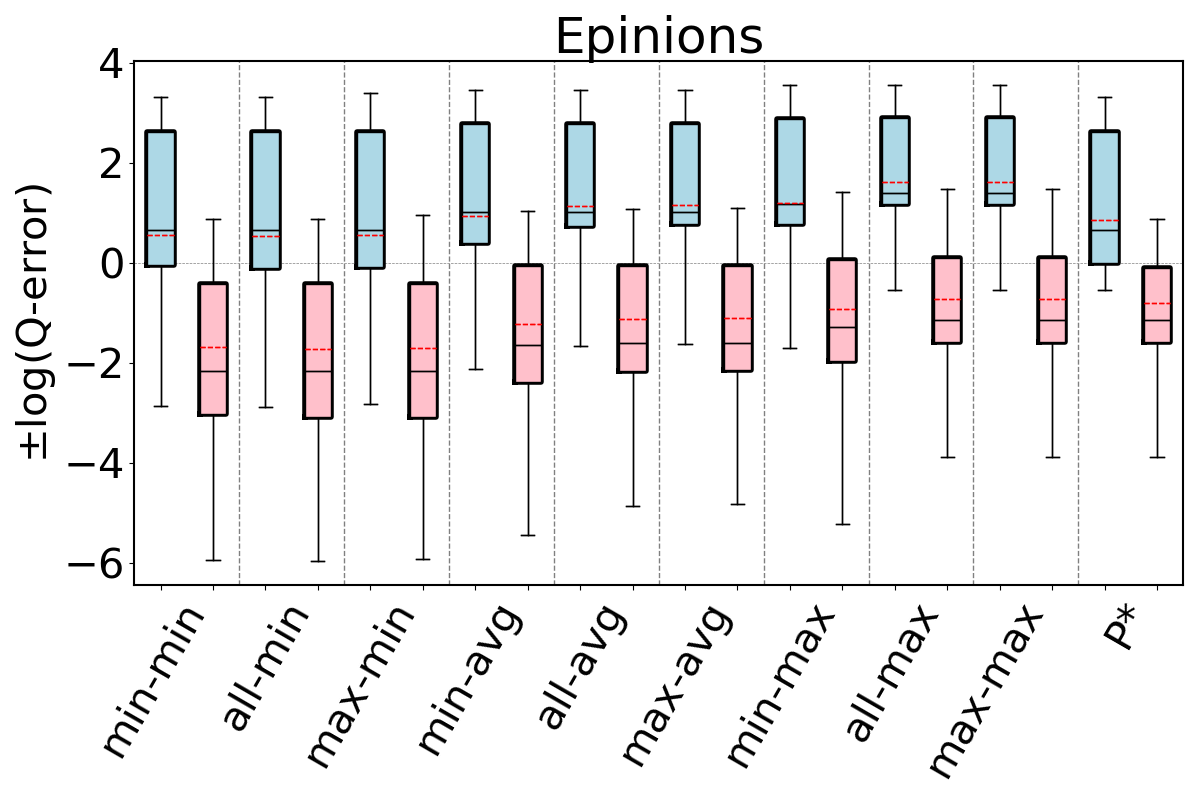}
\captionsetup{justification=centering}
    \includegraphics[width=0.6\columnwidth]{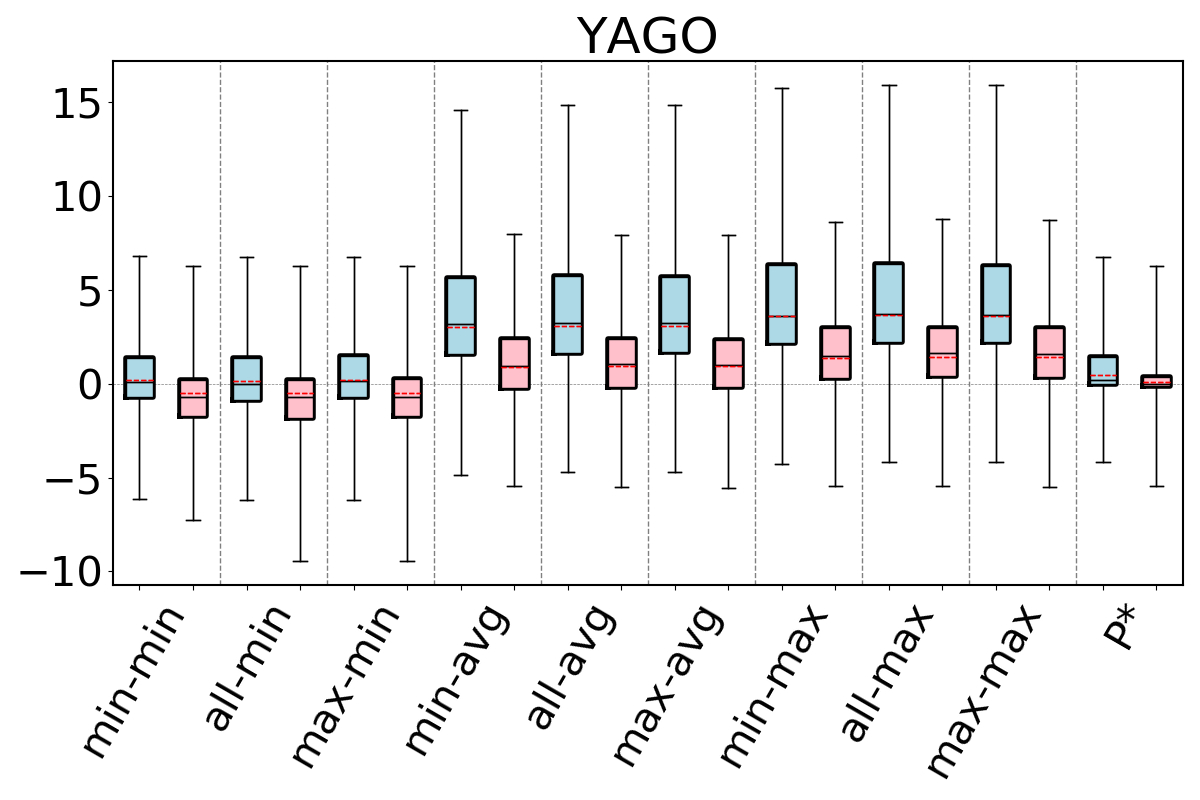}
\vspace{-10pt}
  \caption{Evaluation of the space of optimistic estimators on $CEG_O$, $CEG_{OCR}$ on cyclic queries (\texttt{Cyclic} or \texttt{G-CARE-Cyclic} workloads) on cycles with 4 or more edges.}
\vspace{-10pt}
\label{fig:opt-space-cyclic-box}
\end{figure*}

\subsubsection{Cyclic Queries With Cycles of Size $>$3}
For our second question, we compare our 9 estimators on $CEG_O$ and $CEG_{OCR}$ 
for each dataset-cyclic query workload combination in our benchmark, but only using queries that contain cycles of size $>3$. 
Recall that we now expect that estimates on $CEG_O$ can be generally pessimistic. This is because 
any formula (or bottom to top path) in $CEG_O$ breaks large cycles and estimates paths and 
on real world datasets there are often a lot more paths than cycles. 
In contrast, we expect the edge weights in $CEG_{OCR}$ to fix this pessimism, so estimates on these CEGs 
can still be optimistic.
 Figure~\ref{fig:opt-space-cyclic-box} shows our results. 
As we expected, we now see that across all of the datasets, our 9 estimators on $CEG_O$ generally overestimate.
In contrast to our observations for acyclic queries and cyclic queries with only cycles, we now see that the most 
accurate results are achieved when using the \texttt{min} aggregator (instead of \texttt{max}). 
For the \texttt{min} aggregator, any of the path-length heuristics
seem to perform reasonably well. On DBLP we see
that the \texttt{min-hop-min} heuristic leads to more accurate results, 
while on Hetionet \texttt{max-hop} and \texttt{all-hops}
heuristics, and results are comparable in other datasets. 

In contrast, on $CEG_{OCR}$, 
similar to our results from Figures~\ref{fig:opt-space-acyclic-box} and~\ref{fig:opt-space-cyclic-box-trig},
we still observe that \texttt{max} aggregator yields more accurate results,
although the \texttt{avg} aggregator also yields competitive results on Hetionet and YAGO.
This shows
that simulating the cycle closing by using cycle closing rates
avoids the pessimism of $CEG_O$ and results in optimisitic estimates.
Therefore, as before, countering this optimism using the pessimistic paths
in $CEG_{OCR}$ is an effective technique to achieve accurate estimates. In addition,
we also see that any of the path-length heuristics perform reasonably well. 

Finally to answer our third question, we compare $CEG_O$, and $CEG_{OCR}$ under their best performing heuristics.
We take as this estimator \texttt{min-hop-min} for $CEG_O$ and \texttt{max-hop-max} for $CEG_{OCR}$. We see that, except for DBLP and YAGO where the estimates are competitive,
the estimates on $CEG_{OCR}$ are more accurate. For example, on Hetionet, while the median q-error 
for \texttt{min-hop-min}  on $CEG_O$ is 213.8 (overestimation), it is only 2.0 (overestimation) for \texttt{max-hop-max} $CEG_{OCR}$.
Therefore, we observe that even using the most optimistic estimator on $CEG_O$ may not be very accurate
on cyclic queries with large cycles and modifying this CEG with cycle closing rates
can fix this shortcoming. 

As we did for the experiments presented in Section~\ref{subsub:acyclic-and-triangles}, we verified
that our conclusions do not change
for different query templates that we used in Figure~\ref{fig:opt-space-cyclic-box}. 
The figures corresponding to each query-template in Figure~\ref{fig:opt-space-cyclic-box} can be found in our github repo.

\subsubsection{$P^*$ Estimator and Room for Improvement}
We next answer the question of how much room for improvement there is for the space of optimistic estimators
we considered on $CEG_O$ and $CEG_{OCR}$.   To do so, we consider a 
thought experiment in which, for each query in our workloads, 
an oracle picks the most accurate path in our CEGs. The accuracies of this oracle-based estimator
are shown as $P^*$ bars in our bar charts in Figures~\ref{fig:opt-space-acyclic-box}-\ref{fig:opt-space-cyclic-box}.
We compare the $P^*$ bars in these figures with the \texttt{max-hop-max} estimator on $CEG_O$ 
on acyclic queries and cyclic queries with only triangles, 
and \texttt{max-hop-max} estimator on $CEG_{OCR}$ for queries with larger cycles. 
We find that on acyclic queries, shown in Figure~\ref{fig:opt-space-acyclic-box}, we generally see
little room for improvement, though there is some room in Hetionet and YAGO. 
For example, although the median q-errors of \texttt{max-hop-max} 
and $P^*$ are indistinguishable on Hetionet, the 75 percentile cutoffs for 
\texttt{max-hop-max}  and $P^*$ are 1.52 and 1.07, respectively. 
We see more room for improvement on cyclic query workloads that contain
large cycles, shown in Figures~\ref{fig:opt-space-cyclic-box}.
Although we still find that on DBLP, WatDiv and Epinions, \texttt{max-hop-max} estimator on $CEG_{OCR}$ 
is competitive with $P^*$,
there is a more visible room for improvement on Hetionet and YAGO. For example, on Hetionet,
the median q-errors of \texttt{max-hop-max} and $P^*$ are 1.48 (overestimation) and 
1.02 (underestimation), respectively. On YAGO the median q-errors are 39.8 (overestimation) and 
1.01 (overestimation), respectively. This indicates that future work on designing other 
techniques for making estimations on CEG-based
estimators can focus on workloads with large cycles on these datasets to find opportunities for improvement.

\subsection{Effects of Bound Sketch}
\label{sec:eval-partitioning}
Our next set of experiments aim to answer: {\em How much does the bound-sketch 
optimization improve the optimistic estimators' accuracy?} This question
was answered for the CBS pessimistic estimator in reference~\cite{cai:pessimistic}. We
reproduce the experiment for MOLP in our context as well.
To answer this question, 
we tested the effects of bound sketch on the \texttt{JOB} workload on IMDb 
and \texttt{Acyclic} workload
on Hetionet and Epinions. We excluded WatDiv and DBLP as the 
\texttt{max-hop-max} estimator's estimates are already very close to perfect on 
these datasets and there is no room for significant improvement
(recall Figure~\ref{fig:opt-space-acyclic-box}).
Then we applied the bound sketch optimization to both
 \texttt{max-hop-max} (on $CEG_O$) and MOLP estimators and measured the
q-errors of the estimators under partitioning budgets of 1 (no partitioning), 4, 16, 64, and 128.

Our results are shown in Figure~\ref{fig:bound-sketch}. As demonstrated in 
reference~\cite{cai:pessimistic}, our results confirm that bound sketch improves
the accuracy of MOLP. The mean accuracy of MOLP increases 
between 15\% and 89\% across all of our datasets when moving 
between 1 and 128 partitions. Similarly, we also observe significant gains on the \texttt{max-hop-max}
estimator though the results are data dependent. On Hetionet and Epinions, 
partitioning improves the mean accuracy at similar rates: by 25\% and 89\%, respectively. In contrast,
we do not observe significant gains on IMDb.
We note that the estimations of 68\% of queries in Hetionet 
and 93\% in Epinions strictly improved, so bound sketch is highly robust for optimistic estimators.
We did not observe significant improvements in IMDb dataset for \texttt{max-hop-max}, although 93\%
of their q-errors strictly improve, albeit by a small amount. We will compare optimistic and pessimistic
estimators in more detail in the next sub-section but readers can already see by inspecting the
scale in the y-axes of Figure~\ref{fig:bound-sketch} that
as observed in reference~\cite{park:gcare} the pessimistic estimators are highly inaccurate and
in our context orders of magnitude less accurate than optimistic ones.

\begin{figure}
\centering
\captionsetup{justification=centering}
    \includegraphics[width=0.4\columnwidth]{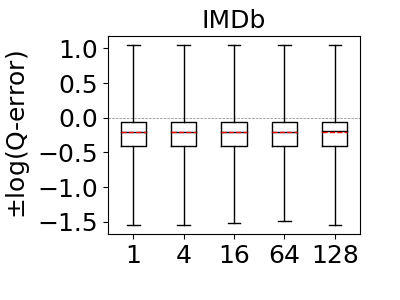}
\captionsetup{justification=centering}
    \includegraphics[width=0.4\columnwidth]{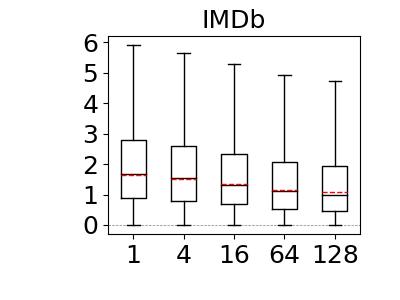}
\captionsetup{justification=centering}
    \includegraphics[width=0.4\columnwidth]{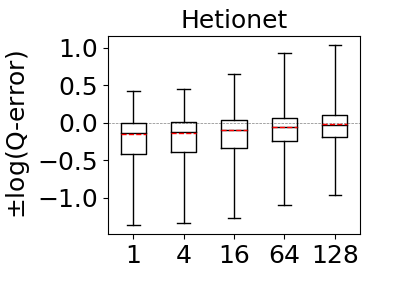}
\captionsetup{justification=centering}
    \includegraphics[width=0.4\columnwidth]{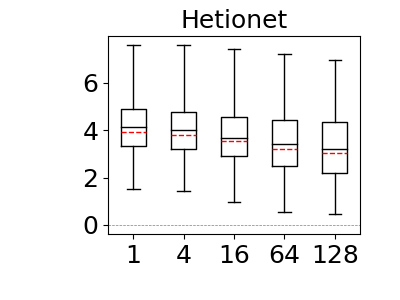}
\captionsetup{justification=centering}
    \includegraphics[width=0.4\columnwidth]{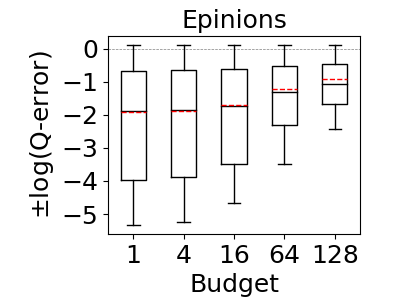}
\captionsetup{justification=centering}
      \includegraphics[width=0.4\columnwidth]{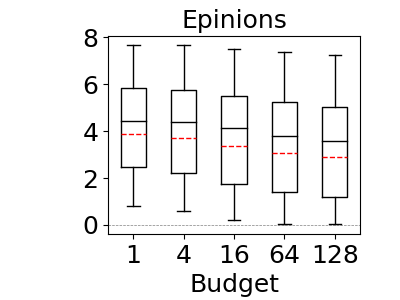}
\vspace{-10pt}
  \caption{Effects of bound sketch on \texttt{max-hop-max} (left column) and MOLP (right column) estimators.}
\vspace{-10pt}
\label{fig:bound-sketch}
\end{figure}

 \subsection{Summary-based Estimator Comparison}
\label{subsec:summary-based}
The optimistic and pessimistic estimators we consider in this paper are summary-based estimators. 
Our next set of experiments compares \texttt{max-hop-max} (on $CEG_O$) and MOLP against each other and against other
baseline summary-based estimators.
A recent work~\cite{park:gcare} has compared
MOLP against two other summary-based estimators, Characteristic Sets (CS)~\cite{neumann:characteristic-sets} and SumRDF~\cite{stefanoni:sumrdf}. 
We reproduce and extend this comparison in our setting with our suggested 
\texttt{max-hop-max} optimistic estimator. We first give an overview of CS and SumRDF.

\begin{figure*}
\centering
\captionsetup{justification=centering}
    \includegraphics[width=0.4\columnwidth]{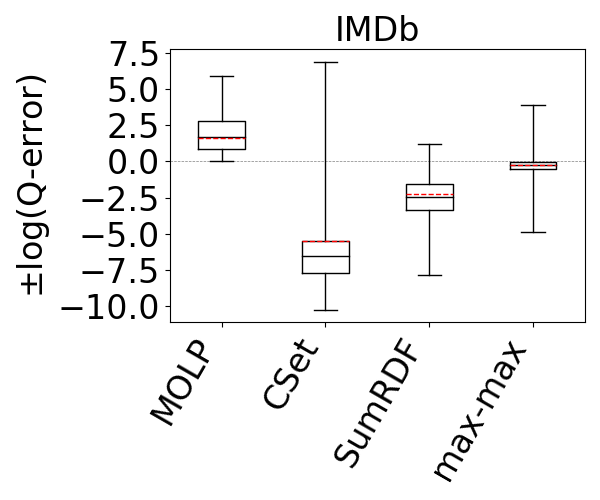}
\captionsetup{justification=centering}
    \includegraphics[width=0.4\columnwidth]{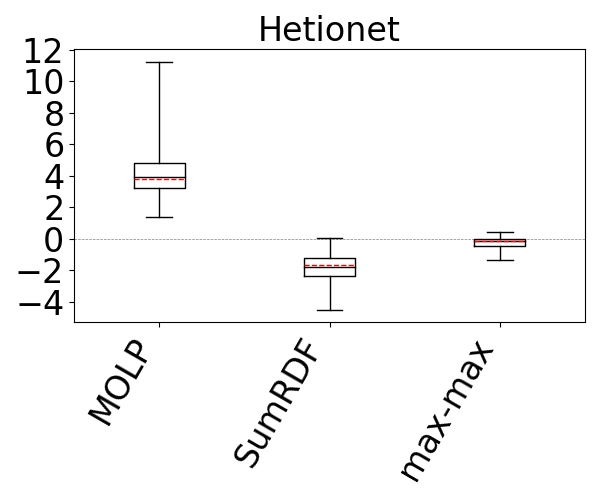}
\captionsetup{justification=centering}
    \includegraphics[width=0.4\columnwidth]{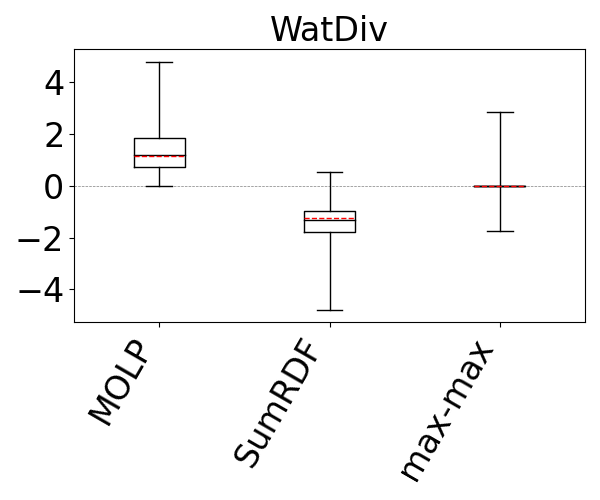}
\captionsetup{justification=centering}
    \includegraphics[width=0.4\columnwidth]{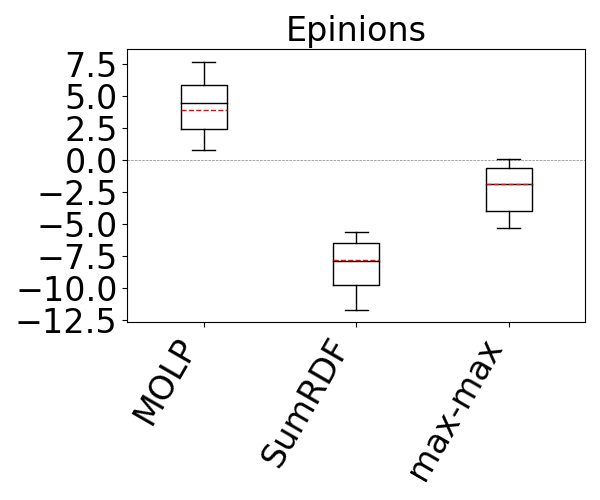}
\captionsetup{justification=centering}
    \includegraphics[width=0.4\columnwidth]{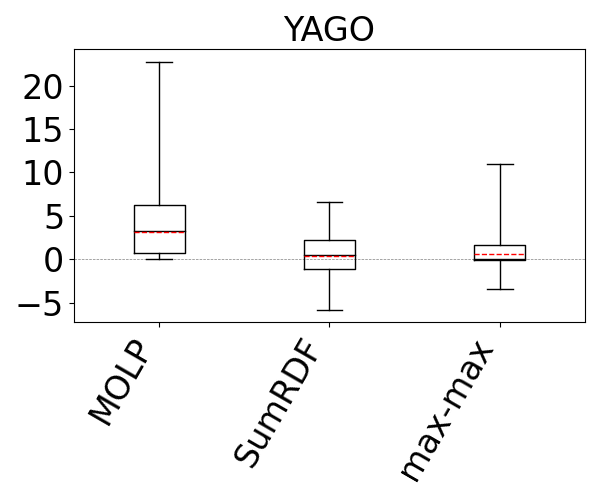}
\vspace{-10pt}
  \caption{Summary-based estimation technique comparison.}
\vspace{-10pt}
\label{fig:summary-based-no-cs}
\end{figure*}

\noindent{\em CS:} CS is a cardinality estimator that was used in the RDF-3X system~\cite{neumann:rdf3x}. 
CS is primarily designed to estimate the cardinalities of stars in an RDF graph. The statistic that it uses are 
based on the so-called {\em characteristic set} of each vertex $v$ in an RDF graph, which is the set of distinct outgoing labels $v$ has. CS keeps statistics about the vertices with the same characteristic set, such as the number of nodes that belong to a characteristic set.
Then, using these statistics, the estimator makes estimates for the number of distinct matches of stars. 
For a non-star query $Q$, $Q$ is decomposed into multiple stars $s_1, ..., s_k$, and the estimates for each $s_i$ is multiplied, which corresponds to an independence assumption. 

\noindent{SumRDF:} SumRDF builds a summary graph $S$ of an RDF graph and adopts a 
holistic approach to make an estimate. Given the summary $S$, SumRDF considers all 
possible RDF graphs $G$ that could have the same summary $S$. Then, it returns the average 
cardinality of  $Q$ across all possible instances. This is effectively another form of uniformity 
assumption: each possible world has the same probability of representing the actual graph on 
which the estimate is being made. Note that the pessimistic estimators can also be seen as 
doing something similar, except they consider the most pessimistic of the possible worlds 
and return the cardinality of $Q$ on that instance.

We measured the q-errors of \texttt{max-hop-max}, MOLP, CS, and SumRDF on the \texttt{JOB} workload on IMDb, the \texttt{Acyclic} workload on Hetionet, WatDiv, and Epinions, and the \texttt{G-CARE-Acyclic} workload on YAGO. We did not use bound sketch for MOLP and \texttt{max-hop-max}. However, we ensured that MOLP uses the cardinalities and
degree information from 2-size joins, which ensures that the statistics MOLP uses is a strict superset of the statistics
\texttt{max-hop-max} estimator uses.
 
Our results are shown in Figure~\ref{fig:summary-based-no-cs}. We omit CS from all figures except the first one, 
because it was not competitive with the rest of the estimators and even when y-axis is in logarithmic scale,
plotting CS decreases the visibility of differences 
among the rest of the estimators. SumRDF timed out on several queries on YAGO and Hetionet
and we removed those queries from \texttt{max-hop-max} and \texttt{MOLP}'s distributions as well.
We make two observations. 
First, these results confirm the results from reference~\cite{park:gcare} that 
although MOLP does not underestimate, its estimates are very loose.
Second, across all summary-based estimators, unequivocally, \texttt{max-hop-max} generates significantly 
more accurate estimations, often by several orders of magnitude in mean estimation.    
For example, on the IMDb and \texttt{JOB} workload, the mean q-errors of \texttt{max-hop-max},
SumRDF, MOLP, and CS are 1.8, 193.3, 44.6, and 333751, respectively.
We also note that both CS and SumRDF perform underestimations
in virtually all queries, whereas there are datasets, such as WatDiv and YAGO, where 
the majority of \texttt{max-hop-max}'s estimates are overestimations.

\subsection{Comparison Against WanderJoin}

Although we studied summary-based estimators in this paper, an alternative technique that has been
studied is based on sampling. Sampling-based techniques are fundamentally different and based
on using unbiased samplers of the query's output. As such, their primary advantage is that 
by enough sampling they are guaranteed to achieve estimations at any required accuracy. However, because they 
effectively perform the join on a sample of tuples, they can be slow, which is why they have seen
little adoption in practice for join estimation. 
This time-vs-accuracy tradeoff is fundamentally different in summary-based estimators, 
which can give more accurate estimates only by storing more statistics, 
(e.g., larger join-sizes in Markov tables), so with more memory and not time.
For completeness of our work, we compare \texttt{max-hop-max}
estimator with WanderJoin (WJ)~\cite{li:wanderjoin, park:gcare}, which is a sampling-based estimator 
that was identified in reference~\cite{park:gcare}
as the most efficient sampling-based technique out of a set of techniques the authors experimented with. 
Our goal is to answer: {\em What is the sampling ratio at which 
WJ's estimate outperforms \texttt{max-hop-max} (on $CEG_O$) in accuracy and 
how do the estimation speeds of WJ and  \texttt{max-hop-max}
compare at this ratio?} We first give an overview of WJ as implemented in reference~\cite{ park:gcare}.

\noindent {\em WanderJoin:} WJ is similar to the index-based sampling described in
reference~\cite{leis:sampling}. Given a query $Q$, WJ picks one of the query edges $e_q$ of $Q$ 
to start the join from and picks a sampling ratio $r \in (0, 1]$, which is the fraction of edges that can match $e_q$ 
that it will sample. For each sampled edge $e_q^*$ (with replacement), WJ 
computes the join one query-edge at-a-time by picking one of the possible
edges of a vertex that has already been matched uniformly at random. Then, if the join is successfully computed,
a correction factor depending on the degrees of the nodes that were extended is applied to get an estimate
for the number of output results that $e_q^*$ would extend to. 
Finally, the sum of the estimates for each sample
is multiplied by $1/r$ to get a final estimate.

We used the G-CARE's codebase~\cite{park:gcare}. We integrated 
the \texttt{max-hop-}\linebreak \texttt{max} estimator
into G-CARE and used the WJ code that was provided. 
We compared WJ and \texttt{max-hop-max} with
sampling ratios 0.01\%, 0.1\%, 0.25\%, 0.5\%, and 0.75\% on the \texttt{JOB} workload on IMDb, 
the \texttt{Acyclic} workload on Hetionet, WatDiv, and Epinions, and the \texttt{G-CARE-Acyclic} 
workload on YAGO. We ran both estimators five times (each run executes inside a single thread) and report
the averages of their estimation times. We also report the average q-error of WJ.
However, we can no longer present under- and over-estimations in our figures, as WJ might under and over-estimate for the same query across different runs.

The box-plot q-error distributions of  \texttt{max-hop-max} and WJ
are shown in Figure~\ref{fig:mt-vs-wj}. We identify the sampling ratios in 
which the mean accuracy of WJ is better than the mean accuracy of \texttt{max-hop-max}, 
except in DBLP and Hetionet,  where both \texttt{max-hop-}\linebreak \texttt{max} and WJ's mean estimates 
are generally close to perfect, so we look at the sampling ratio where WJ's maximum q-errors 
are smaller than \texttt{max-hop-max}. 
We find that this sampling ratio on 
IMDb is 0.1\%, on DBLP is 0.5\%, on Hetionet 
is 0.75\%, on Epinions is 0.5\%, and on YAGO is 0.75\%.
However, the estimation time of WJ is between 15x and 212x slower, so one to 
two orders of magnitude, than \texttt{max-hop-max} except on our smallest 
dataset Epinions, where the difference is 1.95x.
Observe that \texttt{max-hop-max}'s estimation times are very stable and consistently
in sub-milliseconds, between 0.18ms and 0.54ms. This is because \texttt{max-hop-max}'s 
estimation time is independent of the dataset's size. In contrast, WJ's estimation 
times get slower as the datasets get larger, because WJ performs more joins.
For example, at 0.25\% ratio, while WJ takes 0.28ms  on our smallest dataset Epinions, 
it takes 35.4ms  on DBLP. 

We emphasize that these comparisons are not perfect because it is difficult to compare 
distributions and these are two fundamentally different classes of estimators, providing systems with
different tradeoffs. However, we believe that our `competitive sampling ratio' analysis (more than runtime numbers)
is instructive
for interested readers. 

\begin{figure*}
\centering
\captionsetup{justification=centering}
    \includegraphics[width=0.4\columnwidth]{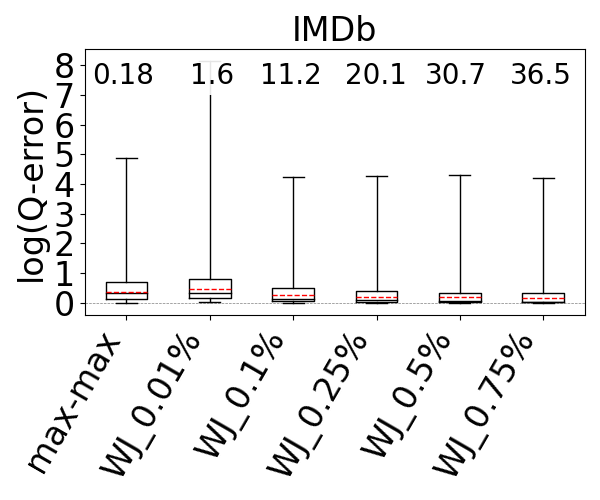}
\captionsetup{justification=centering}
    \includegraphics[width=0.4\columnwidth]{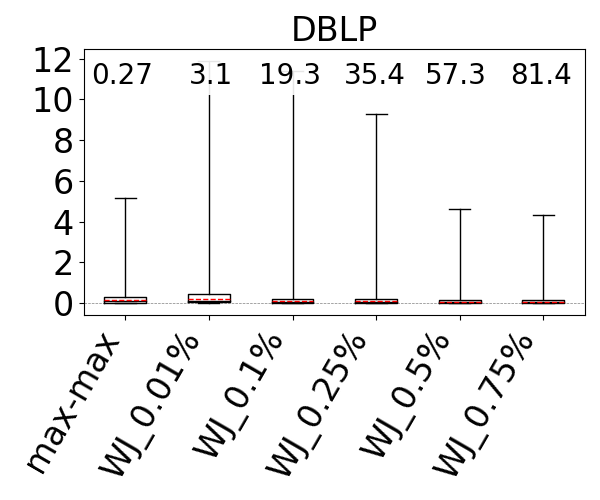}
\captionsetup{justification=centering}
    \includegraphics[width=0.4\columnwidth]{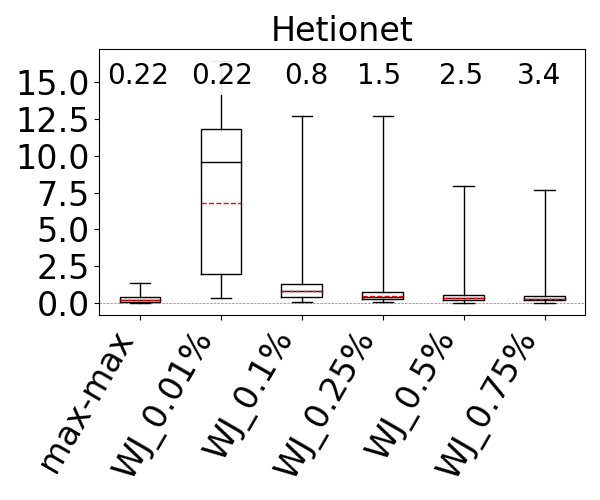}
\captionsetup{justification=centering}
    \includegraphics[width=0.4\columnwidth]{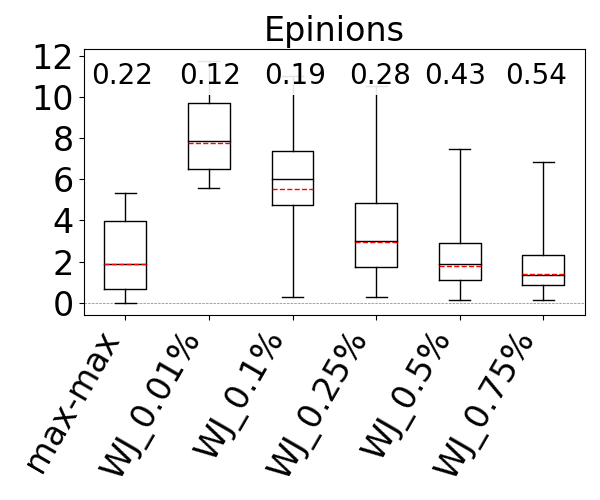}
\captionsetup{justification=centering}
    \includegraphics[width=0.4\columnwidth]{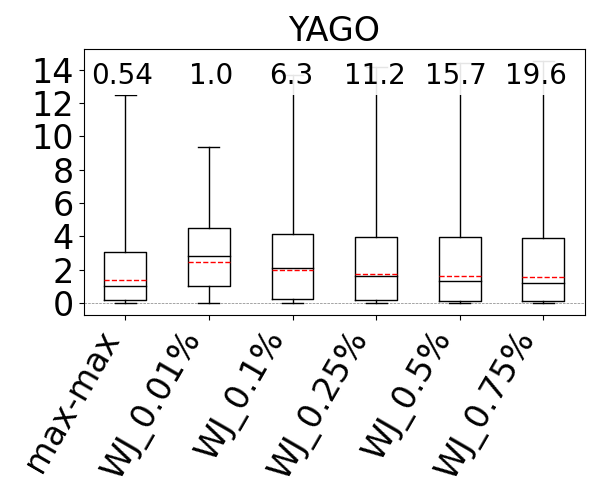}
  \caption{Comparison of \texttt{max-hop-max} and WJ, with average estimation times (in milliseconds) indicated at the top of boxes.}
\label{fig:mt-vs-wj}
\end{figure*}

\subsection{Impact on Plan Quality}
Reference~\cite{leis:optimizers} convincingly established that cardinality estimation is critical for optimizers to generate
good plans for RDBMSs as it leads to better plan generation. Several other work has verified this in different contexts, in RDBMSs~\cite{cai:pessimistic} and in RDF systems~\cite{park:gcare}. In our final set of experiments we set out to verify this in our context too by comparing 
the impact of our estimators on plan quality. We used the RDF-3X system~\cite{neumann:rdf3x} and its source code available 
here~\cite{rdf3x:sourcecode}. We issued our \texttt{Acyclic} workload as
join-only queries to RDF-3X on the DBLP and WatDiv datasets. 
We then ran the query under 10 configurations: first using RDF-3X's default estimator and then
by injecting the cardinality estimates of our 9 optimistic estimators to the system. 
The cardinalities are injected inside the system's dynamic programming-based join optimizer. We 
then filtered out the queries in which all of the 10 estimators lead to picking exactly the same plan
and there was less than 10\% runtime difference between the minimum and maximum runtimes across 
the plans generated from 10 estimators. We were left with 15 queries for DBLP and 8 queries for WatDiv.
We ran each query $5$ times and report the best query execution time. 
The open source version of RDF-3X uses a simple cardinality estimator that is not based on characteristic sets
as in reference~\cite{neumann:rdf3x} but on basic statistics about the original triple counts and some `magic' constants
to estimate cardinalities. We observed that this estimator is highly inaccurate compared to 
the 9 optimistic estimators.
For example, we analyzed the final estimates of the RDF-3X estimator on the 8 WatDiv queries and compared with 
the other estimators. We omit the full results but while the RDF-3X estimator had a median q-error of 127.4
underestimation, the worst-performing of the 9 estimators had a median q-error of only 1.947 underestimation.
So we expect RDF-3X's estimator to lead to worse plans than the other estimators.
We further expect generally that the more 
accurate of the optimistic estimators, such as the \texttt{max-hop-max} estimator, yield more efficient plans than the 
less accurate ones, such as the \texttt{min-hop-min}. 

Figure~\ref{fig:query-processing-time} shows the runtimes of the system under each configuration where the y-axis 
shows the log-scale speedup or slow down of each plan under each estimator compared to the plans under 
the RDF-3X estimator. This is why the figure does not contain a separate box plot for the RDF-3X estimator. 
First observe that the median lines of the 9 estimators are above 0, indicating the each of these estimators,
which have more accurate estimates than RDF-3X's default estimator, leads to better plan generation. 
In addition, observe that the box plot of estimators with the \texttt{max} aggregators are generally 
better than estimators that use the \texttt{min} or \texttt{avg} aggregator.
This correlates with Figure~\ref{fig:opt-space-acyclic-box}, where we showed these estimators lead to more 
accurate estimations on the same set of queries.
We then performed a detailed analysis of the \texttt{max-hop-max} and \texttt{min-hop-min} estimator as 
representative of, respectively, the most and least accurate of the 9 estimators.
We analyzed the queries in which plans under these estimators differed significantly. 
Specifically, we found 10 queries across  both datasets where the runtime differences were at least 
1.15x. Of these, only in 1 of them \texttt{min-hop-min} lead to more efficient plans and by a factor of 1.21x.
In the other 9, \texttt{max-hop-max} lead to more efficient plans, by a median of 2.05x and up to 276.3x, confirming
our expectation that better estimations generally lead to better plans.

\begin{figure}
\centering
\captionsetup{justification=centering}
    \includegraphics[width=0.45\columnwidth]{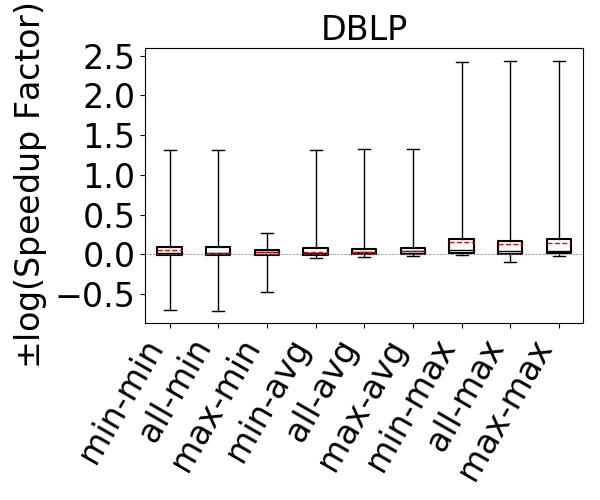}
\captionsetup{justification=centering}
    \includegraphics[width=0.45\columnwidth]{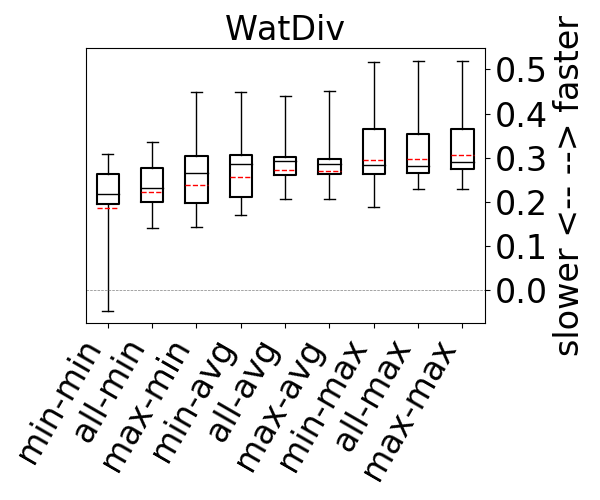}
\vspace{-10pt}
  \caption{RDF-3X runtimes on {Acyclic} workload.}
\vspace{-10pt}
\label{fig:query-processing-time}
\end{figure}

\section{Related Work}
\label{sec:rw}
There are decades of research on cardinality estimation of queries in the context of different database management systems. We cover a part of this literature focusing on work on graph-based database management systems, specifically XML and RDF and on relational systems. 
We covered Characteristic Sets~\cite{neumann:characteristic-sets}, SumRDF~\cite{stefanoni:sumrdf}, and WanderJoin~\cite{li:wanderjoin, park:gcare} in Section~\ref{sec:evaluation}.
We cover another technique, based on maximum entropy that can be used with any estimator that can return estimates for base tables or small-size joins. We do not cover work that uses machine learning techniques to estimate cardinalities and refer the reader to several recent work in this space~\cite{kipf:deep-learning, liu:neural-nets, woltmann:deep-learning} for details of these techniques.

\noindent {\bf Other Summary-based Estimators:} The estimators we studied in this paper 
fall under the category of summary-based estimators. Many relational systems, including commercial 
ones such as PostgreSQL, use summary-based estimators. Example summaries include the cardinalities 
of relations, the number of distinct values in columns, or 
histograms~\cite{aboulnaga:self-tuning-histograms, muralikrishna:equi-depth, poosala:multi-histogram}, 
wavelets~\cite{matias:wavelet}, or probabilistic and statistical models~\cite{getoor:probabilistic-models, 
sun:regression} that capture the distribution of values in columns. These 
statistics are used to estimate the selectivities of each join predicate, which are put together 
using several approaches, such as independence assumptions. In contrast,
the estimators we studied store degree statistics about base relations and small-size joins
(note that cardinalities are a form of degree statistics, e.g., $|R_i| = deg(\emptyset, \R_i)$).

Several  estimators for subgraph queries have proposed summary-based estimators 
that compute a sketch of an input graph. SumRDF, which we covered 
in Section~\ref{sec:evaluation}, falls under this category. In the context of estimating the selectivities 
of path expressions, XSeed~\cite{zhang:xseed} and XSketch~\cite{polyzotis:xsketch} build a sketch $S$ 
of the input XML Document. The sketch of the graph effectively collapses multiple nodes and edges 
into supernodes and edges with metatadata on the nodes and edges. The metadata contains statistics, 
such as the number of nodes that was collapsed into a supernode. Then given a query $Q$, $Q$ is matched 
on $S$ and using the metadata an estimate is made. Because these techniques do not decompose 
a query into smaller sub-queries, the question of which decomposition to use does 
not arise for these estimators. 

Several work have used data structures that are adaptations of histograms from relational systems to store selectivities of path or tree queries in XML documents. Examples include, {\em positional histograms}~\cite{wu:positional-histogram} and {\em Bloom histogram}~\cite{wang:bloom-histogram}. These techniques do not consecutively make estimates for larger paths and have not been adopted to general subgraph queries. For example, instead of storing small-size paths in a data structure as in Markov tables, Bloom histograms store all paths but hashed in a bloom filter. Other work used similar summaries of XML documents (or its precursor the {\em object exchange model}~\cite{papakonstantinou:oem} databases)  for purposes other than cardinality estimation. For example, {\em TreeSketch}~\cite{polyzotis:treesketch} 
produces a summary of large XML documents to provide approximate answers to queries.

\noindent {\bf Sampling-based Estimators:} Another class of important estimators are based on sampling tuples~\cite{haas:sampling, leis:sampling, li:wanderjoin, vengerov:correlated-sampling, wu:sampling}. These estimators either sample input records from base tables offline or during query optimization, and they evaluate queries on these samples to make estimates. Research has focused on different ways samples can be generated, such as independent or correlated sampling, or sampling through existing indexes. Wander Join, which we covered in 
Section~\ref{sec:evaluation} falls under this category. 
As we discussed,  by increasing the sizes of the samples 
these estimators can be arbitrarily accurate but  they are in general slower than 
summary-based ones because they
actually perform the join on a sample of tuples. We are aware of systems~\cite{leis:job} that use sampling-based estimators 
 to estimate the selectivities of predicates in base tables but not on multiway joins. 
Finally, sampling-based estimators have also been 
used to estimate frequencies of subgraphs relative to each other to 
discover {\em motifs}, i.e. infrequently appearing subgrahs,~\cite{kashtan:motif}.

\noindent {\bf The Maximum Entropy Estimator:} Markl et al.~\cite{markl:entropy} has proposed another approach to make estimates for conjunctive predicates, say $p_1 \land ... \land p_k$ given a set of $\ell$ selectivity estimates $s_{i_{11}, ..., i_{1k}}$, $s_{i_{\ell 1}, ..., i_{\ell k}}$, where $s_{i_{j1}, ..., i_{jk}}$ is the selectivity estimate for predicate $p_{i_{j1}} \land ... \land p_{i_{jk}}$. Markl et al.'s maximum entropy approach takes these known selectivities and uses a constraint optimization problem to compute the distribution that maximizes the entropy of the joint distribution of the $2^k$ possible predicate space. Reference~\cite{markl:entropy} has only evaluated the accuracy of this approach for estimating conjunctive predicates on base tables and not on joins, but they have briefly described how the same approach can be used to estimate the cardinalities of join queries. Multiway join queries can be modeled as estimating the selectivity of the full join predicate that contains the equality constraint of all attributes with the same name. The statistics that we considered in this paper can be translated to selectivities of each predicate. For example the size of $|(a_1)\xrightarrow{A}(a_2)\xrightarrow{B}(a_3)|$ can be modeled as $s_i = \frac{|(a_1)\xrightarrow{A}(a_2)\xrightarrow{B}(a_3)|}{|A||B|}$, as the join of $A$ and $B$ is by definition applying the predicate $A.src = B.dst$ predicate on the Cartesian product of relations $A$ and $B$. This way, one can construct another optimistic estimator using the same statistics. We have not investigated the accuracy of this approach within the scope of this work and leave this to future work.

\section{Conclusions and Future Work}
\label{sec:fw}

We focused on how to make accurate estimations using the optimistic estimators
using a new framework, in which we model these estimators as paths in a weighted CEG we called $CEG_O$,
which uses as edge weights average degree statistics.
We addressed two shortcomings of optimistic estimators from prior work. First, we addressed the question of 
which formulas, i.e., CEG paths, to use when there are multiple formulas to estimate a given query.
We outlined and empirically evaluated a space of heuristics and
showed that the heuristic that leads generally to most accurate estimates depends on 
the structure of the query. For example, we showed that for acyclic queries and cyclic queries wth
small-size cycles, using the maximum weight path is an effective way to make accurate estimates.
Second, we addressed how to make accurate
estimates for queries with large cycles by proposing a new CEG that we call $CEG_{OCR}$.
We then showed that surprisingly the recent pessimistic estimators can also be modeled 
as picking a path, this time the minimum weight path in another CEG, in which the edge
weights are maximum degree statistics. Aside from connecting two disparate 
classes of estimators, this observation allowed us 
to apply the bound sketch optimization for pessimistic estimators to optimistic ones. We
also showed that CEGs are useful mathematical tools to prove several properties of 
pessimistic estimators, e.g., that the CBS estimator is equivalent to MOLP estimator
under acyclic queries over binary relations.

We believe the CEG framework can be the foundation for further research to define and evaluate
other novel estimators. Within the scope of this work, we considered only
three specific CEGs, with a fourth one that is used for theoretical purposes in
\iflongversion
Appendix~\ref{app:dbp}.
\else
the long version of our paper~\cite{chen:ceg-tr}.
\fi
However, many other CEGs can be defined to develop new estimators using different statistics
and potentially different techniques to pick paths in CEGs as estimates. 
For example, one can use variance, standard deviation, or entropies of the distributions 
of small-size joins as edge weights in a CEG, possibly along with degree statistics, 
and pick the minimum-weight, e.g.,, ``lowest 
entropy'', paths, assuming that degrees are more regular in lower entropy edges. 
An important research direction is to systematically study a class of CEG 
instances that use different combination of statistics as edge weights, as well as heuristics on 
these CEGs for picking paths, to understand which statistics lead to more accurate results
in practice.
\balance

\section{Acknowledgment}
This work was supported by NSERC and by a grant from Waterloo-Huawei Joint Innovation Laboratory.

\iflongversion
\else
\pagebreak
\fi
\bibliographystyle{ACM-Reference-Format}
\bibliography{references}
\iflongversion
\appendix
\appendix

\section{Projection Inequalities Can Be Removed From MOLP}
\label{app:projection}
Consider any $(\emptyset, \A)$ path $P$$=$$(\emptyset)$$\xrightarrow{e_0}$$(E_1)$...$(E_k)$$\xrightarrow{e_k}$$(\A)$ and consider its first projection edge, say $e_i$.
We show that we can remove $e_i$ and replace the rest of the edges $e_{i+1}$ to $e_k$ with (possibly) new edges $e_{i+1}'$ to $e_k'$ with exactly the same weights and construct $P'$$=$$(\emptyset)$$\xrightarrow{e_0}$$(E_1)$...$\xrightarrow{e_{i-1}}$ $(E_i)$$\xrightarrow{e'_{i+1}}$$(E'_{i+2})$
 ....$(E'_k)$$\xrightarrow{e'_k}$$(\A)$, where $E'_{i+2} \supseteq E_{i+2}$, ..., $E'_k \supseteq E_k$. 
This can be seen inductively as follows. We know $E_i \supseteq E_{i+1}$ because $e_i$ is a projection edge. Then if $(E_{i+1})$$\xrightarrow{e_{i+1}}$$(E_{i+2})$ edge was an extension edge that extended $E_{i+1}$ to $N_{i+1} = E_{i+2} \setminus E_{i+1}$ attributes, then by construction, there is another edge $(E_{i})$$\xrightarrow{e'_{i+1}}$$(E'_{i+2} = E_{i} \cup N_i)$ in $CEG_M$ with the same weight as $e_{i+1}$. If instead $e_{i+1}$ was a projection edge that removed a set of attributes from $E_{i+1}$, similarly there is another projection edge $e'_{i+1}$ that removes the same set of attributes from $E_{i}$. So inductively, we can find an alternative sub-path from $E_{i+1}$ to $\A$, $(E_{i+1})$$\xrightarrow{e'_{i+1}}$...$\xrightarrow{e'_k}$$(\A)$ with the same length as the sub-path $(E_{i+1})$$\xrightarrow{e_{i+1}}$...$\xrightarrow{e_k}$$(\A)$.

\section{CBS Estimator's Connection to MOLP On Acyclic Queries}
\label{app:wbs-molp}

We first show that on acyclic queries MOLP is at least as tight as the CBS estimator. Our proof is based on showing that for each bounding formula generated by BFG and FCG (respectively, Algorithms 1 and 2 in reference~\cite{cai:pessimistic}), there is a path in $MOLP_C$. For a detailed overview of these algorithms, we refer the reader to reference~\cite{cai:pessimistic}. We then show that if the queries are further on binary relations, then the standard MOLP bound, which only contains degree statistics about subsets of attributes in each relation, is exactly equal to the CBS estimator. 

\noindent {\em For each bounding formula generated by BFG and FCG, there is a path in $CEG_M$:} 
Let $C$ be a coverage combination enumerated by FCG. Consider a bounding formula $F_{\mathpzc{C}}$. We can represent $F_{\mathpzc{C}}$ as a set of $(R_i, X_i)$ triples, where $X_i \subseteq \A_i$ is the set of attributes that $R_i$ covers and is of size either 0, $|\A_i|-1$, or $|\A_i|$. Let $Y_i = \A_i \setminus X_i$. Then the bounding formula generated for $F_{\mathpzc{C}}$ will have exactly cost $\sum_i \log deg(Y_i, R_i)$ 
(recall that $deg(Y_i, R_i) = deg(Y_i, \A_i, R_i)$). This is because there are 3 cases: (i) if $|X_i| = 0$, then the BFG ignores $R_i$ and $deg(Y_i, R_i) = 0$; (ii) if $|X_i| = |\A_i - 1|$, then BFG uses in its formula the degree of the single uncovered attribute $a$ in $\A_i$, which is equal to $deg(Y_i, R_i)$ as $Y_i$ only contains $a$; and (iii) if $|X_i| = |\A_i |$, then BFG uses $|R_i|$ in its formula, and since $Y_i = \emptyset$, $deg(Y_i, R_i) = |R_i|$. 

We next show that $CEG_M$ contains an $(\emptyset, \A)$ path with exactly the 
same weight as the cost of $F_{\mathpzc{C}}$. We first argue that if $Q$ is acyclic, then 
there must always be at least one relation $R_i$ in the coverage $C$, that covers all of 
its attributes. Assume for the purpose of contradiction that each relation $R_i(\A_i) \in Q$ either 
covers 0 attributes or  $|\A_i|-1$ attributes. Then start with any $R_{i1}(\A_{i1})$ that covers 
$|A_{i1}|-1$ attributes. Let $a_{i1} \in \A_{i1}$ be the attribute not covered by $R_{i1}$. Then 
another relation $R_{i2}(\A_{i2})$ must be covering $a_{i1}$ but not covering exactly one attribute 
$a_{i2} \in \A_{i2}$. Similarly a third relation $R_{i3}$ must be covering $a_{i2}$ but not 
covering another attribute $a_{i3} \in \A_{i3}$, etc. Since the query is finite, there must 
be a relation $R_j$ that covers an $a_{j-1}$ and whose other attributes are covered by 
some relation $R_k$, where $k < j$, which forms a cycle, contradicting our assumption that $Q$ is acyclic.

We can finally construct our path in $CEG_M$. Let's divide the relations into $\R_C$, which cover 
all of their attributes, i.e., $\R_C = \{ R_i(\A_i) | R_i \text{ covers $|\A_i|$ attributes}\}$, and $\R_E$, 
which cover all but one of their attributes, $\R_E = \{ R_i(\A_i) \text { s. t. } R_i$ $\text{ covers $|\A_i| - 1$ attributes}\}$. 
We ignore the relations that do not cover any attributes. 
Let relation in $\R_C = R_{C1}(\A_{C1}), ..., $\linebreak$R_{Ck}(\A_{Ck})$ and those in $\R_E = R_{E1}(\A_{E1}), ...,$$R_{Ek'}(\A_{Ek'})$. Then we can construct the following path. The first of the path uses the cardinalities or relations in $\R_C$ in an arbitrary order to extend $(\emptyset)$ to $U=(\cup_{i=1,...k}\A_{Ci})$. For example this path can be: $P_1=(\emptyset) \xrightarrow{\log(|R_{C1}|}(\A_{C1}) \xrightarrow{\log(|R_{C2}|} (\A_{C1} \cup \A_{C2}) ... \xrightarrow{\log(|R_{Ck}|}(U)$. Now to extend 
$U$ to $\A$, observe that for each uncovered attribute $T=\A \setminus U$, there must be some relation $R_{Ej}(\A_{Ej}) \in \R_E$, such that all of the $|\A_{Ej}|-1$ attributes are already bound in $U$. This is because $|T|=k'$ and if each $R_{Ej}$ has at least 2 attributes in $T$, then $Q$ must be cyclic. Note that this is true because by definition of acyclicity~\cite{yu:gyo} any ``ear'' that we remove can be iteratively covered by at most one relation, which means by the time we remove the last ear, we must be left with a relation $R_{Ej}$ and one attribute, which contradicts that $R_{Ej}$ had at least 2 uncovered attributes in $T$. So we can iteratively extend the path $P_1$ with one attribute at a time with an edge of weight $\log (deg(Y_{Ej}, R_{Ej}))$ until we reach $\A$. Note that this path's length would be exactly the same as the cost of the bounding formula generated by BFG and FCG for the coverage $C$.

\noindent {\em When relations of an acyclic query are binary the CBS estimator is equal to MOLP:}
Ideally we would want to prove that when relations are binary that any path in $CEG_M$ corresponds to a bounding formula. However this is not true. Consider the simple join $R(A, B) \bowtie S(B, C)$. $CEG_M$ contains the following path, $P=(\emptyset)\xrightarrow{ \log deg(\{B\}, \{B\}, R}(\{B\})$\linebreak$\xrightarrow{\log deg(C, \{B, C\}, S)}(\{B, C\})\xrightarrow{\log deg(A, \{A,B\}, R)} (\{A, B, C\})$. There is no bounding formula corresponding to this path in the CBS estimator because the first edge with weight $\log deg(\{B\}, \{B\}, R)$ uses the cardinality  of projection of $R$. However, the CBS estimator does not use cardinalities of projections in its bounding formulas and only uses the cardinalities of relations. Instead, what can be shown is that if a path $P$ in $CEG_M$ uses the cardinalities of projections, then there is another path $P'$ with at most the same length, for which the CBS estimator has a bounding formula. 

First we show that given an acyclic query over binary relations, if a path $P$ in $CEG_M$ contains cardinalities of projections, then there is an alternative path $P'$ that has at most the length of $P$ and that contains at least one more edge that uses the cardinality of a full relation. We assume w.l.o.g. that $Q$ is connected. Note that in $P$ any edge from $(\emptyset)$ in $CEG_M$ must be using the cardinality of a relation or a projection of a relation (the only outgoing edges from $(\emptyset)$ are these edges). Let us divide the edge in $P$ into multiple groups: (i) $Card$ are those edge that extend a subquery with two attributes and use the cardinality of a relation; (ii) $Ext-Card$ are those edges that bound an attribute in a relation $R_i$ in $Card$ and extend a subquery to one new attribute $a$ using the degree of $a$ in $R_i$; (iii) $Proj$ are those edges that extend a subquery by a single attribute $a$, without bounding any attributes in the subquery, i.e., using the cardinality of the projection of a relation $R_i$ onto $a$ (so the weight of these edges 
look like $\log(deg(\{a\}, \{a\}, R_i))$; and (iv) $Rem$ are the remaining edges that extend a subquery 
by one attribute either bounding another attribute in $Proj$ or some other attribute in $Rem$. 

We first note that we can assume w.l.o.g., that if any relation $R_i(a_1, a_2)$ is used in 
an edge $e_p \in Proj$ to extend to, say, $a_2$, then $a_1$ cannot be an attribute 
covered by the edges in $Card$ or $Ext$. Otherwise we can always replace $e_p$, 
which has weight $\log (\Pi_{a_2} R_i)$ with an edge we can classify as $Ext$ with 
weight $\log (deg(a_2, \{a_1, a_2\}, R_i))$ because $|\Pi_{a_2} R_i| \ge deg(a_2, \{a_1, a_2\}, R_i)$. 
Next, we argue that we can iteratively remove two edges from $Proj$ and possibly $Rem$ 
and instead add one edge to $Card$ without increasing the length of $P$. First observe 
that if $Rem$ is empty, we must have a relation $R_i(a_1, a_2)$ whose 
both attributes are in set $Proj$, in which case, we can remove these two edges 
and replace with a single $Card$ edge that simply has weight $\log (\emptyset, \{a_1, a_2\}, R_i)$
and reduce $P$'s length because $|R_i| \le \Pi_{a_1}R_i \times \Pi_{a_2} R_i$. Note that if 
$Rem$ is not empty, then at least one of the edges $e_r$ must be bounding an attribute $a_1$
and extending to $a_2$ using a relation $R_p$, where $a_1$ must be extended by an 
edge $e_p$ in $Proj$ using the same relation $R_p$. Otherwise there would be some 
edge $e$ in $Rem$ that extended a subquery $W_1$ to $W_1 \cup \{a_j\}$ without 
bounding the attribute that appears in the weight of $e$. This is because note that if we 
remove the relations that were used in the edges in $Card$ and $Ext$ then we would be 
left with an acyclic query, so have $t$ relations and $t+1$ attributes that need to be 
covered by $Proj$ and $Rem$. If no edge $e_r$ is bounding an attribute $a_i$ in $Proj$, then 
one of the $t$ relations must appear twice in the edges in $Rem$, which cannot happen 
because the relations are binary (i.e., this would imply that the attributes of a  relation $R_x(a_{x1}, a_{x2})$ 
were covered with two edges with weights $\log (deg (\{a_{x1}\}, \{a_{x1}, a_{x2}\}, R_x))$ and 
$\log (deg (\{a_{x2}\}, \{a_{x1}, a_{x2}\}, R_x))$, which cannot happen). 
Therefore such an $e_r$ and $e_p$ must exist. Then we can again remove them and replace 
with one edge to $Card$ with weight $\log (deg(\emptyset, $ $\{a_1, a_2\}, R_p))$ and 
decrease the weight of $P$. Therefore from any $P$ we can construct a $P'$ whose 
length is at most $P$ and that only consist of edges $Card$ and $Ext$. Readers can 
verify that each such path $P'$ directly corresponds to a bounding formula generated 
by BFG and FCG (each relation $R_i$ used by an edge in $Card$ and $Ext$, respectively 
corresponds to a relation covering exactly $|\A_i|$ and $|\A_i|-1$ attributes).

\section{Counter Example for Using the CBS Estimator on Cyclic Queries}
\label{app:wbs-cyclic}
Consider the triangle query $R(a, b) \bowtie S(b, c) \bowtie T(c, a)$. FCG would generate the cover $a\rightarrow R$, $b \rightarrow S$, and $c \rightarrow T$. For this cover, BFG would generate the bounding formula:  $h(a, b, c) \le h(a | b) + h(b | c) + h(c | a)$, which may not be a pessimistic bound. As an example, suppose each relation $R$, $S$, and $T$ contains $n$ tuples of the form $(1, 1) ... (n, n)$. Then this formula would generate a bound of 0, whereas the actual number of triangles in this input is $n$.

\section{DBPLP}
\label{app:dbp}
We demonstrate another application of CEGs and provide alternative combinatorial proofs to some properties of DBPLP, which is another worst-case output size bound from reference~\cite{joglekar:degree}. DBPLP is not as tight as MOLP, which our proofs demonstrate through a path-analysis of CEGs. 
We begin by reviewing the notion of a cover of a query.  
\begin{definition} A cover $C$ is a set of ($R_j$, $A_j$) pairs where $R_j \in \R$ and $A_j \in \A_j$, such that the union of $A_j$ in $C$ ``cover'' all of $\A$, i.e., $\cup_{(R_j, A_j)\in C} A_j = \A$.   
\end{definition}

\noindent DBPLP of a query is defined for a given cover $C$ as follows:
\begin{align*}
\begin{split}
& \text{\bf{Minimize\:}} \Sigma_{a \in \A} v_{a}  \\
& \Sigma_{a \in A_j \setminus A_j'} v_a \ge \log(deg(A_j', \Pi_{A_j}R_j)),  \: \forall (R_j, A_j) \in C, A_j' \subseteq A_j
\end{split}
\end{align*}
Note that unlike MOLP, DBPLP is a maximization problem and has one variable for each attribute $a \in \A$ (instead of each subset of $\A$). Similar to the MOLP constraints, we can also provide an intuitive interpretation of the DBPLP constraints. For any $(R_j, A_j)$ and $A_j' \subseteq A_j$, let $B=A_j \setminus A_j'$. Each constraint of the DBPLP indicates that the number of $B$'s that any tuple that contains $A_j'$ can extend to is $deg(A_j, \Pi_{A_j}R_j)$, which is the maximum degree of any $A_j'$ value in  $\Pi_{A_j}R_j$. 
Each constraint is therefore effectively an extension inequality using a maximum degree constraint.
Based on this interpretation, we next define a DBPLP CEG, $CEG_D$, to provide insights into DBPLP.

\noindent {\bf DBPLP CEG ($CEG_D$):} 
\begin{squishedlist}
\item {\em Vertices:} For each $X \subseteq \A$ we have a vertex. 
\item {\em Extension Edges:} Add an edge with weight $deg(A_j', \Pi_{A_j}R_j)$ between any $W_1$ and $W_2$, such that $A_j' \subseteq W_1$ and $W_2 = W_1 \cup (A_j \setminus A_j')$. 
\end{squishedlist}
We note that DBPLP is not the solution to some path in this CEG, but defining it allows to provide some insights
into DBPLP. Observe that DBPLP and MOLP use the same degree information and the
 condition for an extension edge is the same. Therefore $CEG_D$ contains the same set of vertices as $CEG_M$ but a subset of the edges of $CEG_M$. For example $CEG_D$ does not contain any of the projection edges of $CEG_M$. Similarly, $CEG_D$ does not contain any edges that contain degree constraints that cannot be expressed in the cover $C$, because in the ($R_j$, $A_j$) pairs in $C$, $A_j$ may not contain every attribute in $\A_j$. Consider our running example and the cover $C$$=$$\{(\{a_1, a_2\}, R_A)$, $(\{a_3, a_4\}, R_C)\}$. The DBPLP would contain, 6 constraints, 3 for $(\{a_1, a_2\}, R_A)$ and 3 for  $(\{a_3, a_4\}, R_C)$. For example one constraint would be that $v_{a_1} + v_{a_2} \ge\log(deg(\emptyset,$\linebreak$\Pi_{\{a_1, a_2\} R_A}) = \log(|R_A|) = \log(4)$.
The following theorem provides insight into why DBPLP is looser than MOLP using $CEG_D$. 

\begin{theorem}
\label{THM:DBPLP-PATH}
Let $P$ be any $(\emptyset, \A)$ path in $CEG_{D}$ of a query $Q$ and cover $C$ of $Q$. Let $d_{\A}$ be the solution to the DBPLP of $Q$. Then $d_{\A}$ $\ge$ $|P|$.  
\end{theorem}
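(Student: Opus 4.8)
The plan is to follow the template of the proof of Theorem~\ref{thm:mo-sp}, but to exploit the fact that DBPLP has one variable $v_a$ per attribute rather than one per subset. I would prove the stronger statement that \emph{every} feasible solution $v=(v_a)_{a\in\A}$ of the DBPLP of $Q$ for cover $C$ already satisfies $\sum_{a\in\A} v_a \ge |P|$. Since $d_{\A}$ is the optimum of a minimization, and hence equals the smallest attainable value of $\sum_{a\in\A}v_a$ over feasible $v$, this at once yields $d_{\A}\ge |P|$.

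First I would unpack the path. Write $P=(\emptyset)=(E_0)\xrightarrow{e_0}(E_1)\to\cdots\xrightarrow{e_k}(E_{k+1})=(\A)$. By the definition of the extension edges of $CEG_D$, each edge $e_i$ is witnessed by a pair $(\R_{j_i},A_{j_i})\in C$ and a subset $A_{j_i}'\subseteq A_{j_i}$ with $A_{j_i}'\subseteq E_i$, where the newly covered attributes are $N_i:=E_{i+1}\setminus E_i = A_{j_i}\setminus A_{j_i}'$ and the weight is $w(e_i)=\log\!\big(deg(A_{j_i}',\Pi_{A_{j_i}}\R_{j_i})\big)$. The structural fact I would rely on is that $N_0,\dots,N_k$ form a partition of $\A$: they are pairwise disjoint because each step adds only genuinely new attributes, and their union is $E_{k+1}\setminus E_0=\A$. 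I would then pair each edge with its defining DBPLP constraint: the pair $(\R_{j_i},A_{j_i})$ together with $A_{j_i}'$ contributes precisely the inequality $\sum_{a\in A_{j_i}\setminus A_{j_i}'}v_a \ge \log(deg(A_{j_i}',\Pi_{A_{j_i}}\R_{j_i}))$, that is, $\sum_{a\in N_i}v_a \ge w(e_i)$. Summing these $k+1$ inequalities and using that the $N_i$ partition $\A$ gives $\sum_{a\in\A}v_a=\sum_{i=0}^{k}\sum_{a\in N_i}v_a \ge \sum_{i=0}^{k}w(e_i)=|P|$, which closes the argument (note that the partition makes the first equality exact, so no sign assumption on the $v_a$ is needed). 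To then derive that DBPLP is looser than MOLP I would invoke Theorem~\ref{thm:mo-sp}: since $CEG_D$ has the same vertices as $CEG_M$ and its edges are a subset of those of $CEG_M$ with identical weights, any $(\emptyset,\A)$ path of $CEG_D$ is also one of $CEG_M$, so $|P|\ge m_{\A}$ and therefore $d_{\A}\ge m_{\A}$.

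The step I expect to be the main obstacle is justifying that the $N_i$ are disjoint, i.e.\ that along a bottom-to-top path no attribute is ever ``re-added.'' This is exactly what makes the summation telescope into $\sum_{a\in\A}v_a$ rather than an overcount: if an edge were allowed to extend $E_i$ by a set $A_{j_i}\setminus A_{j_i}'$ overlapping $E_i$, the associated constraint would range over a set strictly larger than $N_i$, the sum would double-count some $v_a$, and the final inequality would fail. I would resolve this by reading the edge condition so that $A_{j_i}'$ denotes exactly the attributes of $\R_{j_i}$ already present, namely $A_{j_i}'=A_{j_i}\cap E_i$, whence the extension attributes $A_{j_i}\setminus A_{j_i}'=A_{j_i}\setminus E_i$ are automatically disjoint from $E_i$; under this reading the constraint set coincides with $N_i$ and the partition property is immediate, so the rest of the proof is the one-line summation above.
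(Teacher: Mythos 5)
Your proof is correct and follows essentially the same route as the paper's: associate each edge of $P$ with its defining DBPLP constraint, observe that the attribute sets over which these constraints range are pairwise disjoint and cover all of $\A$, and sum the constraints to conclude $\sum_{a\in\A}v_a\ge|P|$ for every feasible solution, hence for the optimum $d_{\A}$. The only differences are minor: the paper additionally establishes that an $(\emptyset,\A)$ path always exists in $CEG_D$ (needed for the downstream corollary rather than for the theorem as stated, which quantifies over a given path), while you are more explicit than the paper about why the newly added attribute sets $N_i$ coincide exactly with the constraint index sets and are disjoint --- a point the paper asserts without justification and which your reading of the edge condition ($A_{j_i}'=A_{j_i}\cap E_i$) resolves cleanly.
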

\begin{proof}
We first need to show that there is always an $(\emptyset, \A)$ path in $CEG_{D}$. We can see the existence of such a path as follows. Start with an arbitrary ($R_j$, $A_j$) pair in $C$, which has an inequality for $A_j'=A_j$ which leads to an $\emptyset \rightarrow{} A_j$ edge. Let $X=A_j$. Then take an arbitrary ($R_i$, $A_i$) such that $Z=A_i \setminus X \neq \emptyset$, which must exist because $C$ is a cover. Then we can extend $X$ to $Y=X \cup Z$, because by construction we added an $X\rightarrow Y$ edge for the constraint where $A_i' = A_i \setminus Z$ in DBPLP (so the constraint is $\Sigma_{a \in Z} v_a \ge \log(deg(A_i \setminus Z, \Pi_{A_i}(R_i)))$). 

Now consider any  $(\emptyset, \A)$ path $P=\emptyset\xrightarrow{e0}X_0\xrightarrow{e1}...\xrightarrow{ek}\A$. Observe that by construction of $CEG_D$ each edge $e_i$ extends an $X$ to $Y = X \cup Z$ and the weight of $e_i$ comes from a constraint $\Sigma_{a \in Z} v_a \ge$ $\log(deg(A_j \setminus Z, \Pi_{A_j} R_j))$ 
for some $(R_j, A_j)$. Therefore the variables that are summed over each edge is disjoint and contain every variable. So we can conclude that $\sum_{a \in \A} \ge |P|$. In other words, each $(\emptyset, \A)$ path identifies a subset of the constraints $c_1, ..., c_k$ that do not share the same variable $v_a$ twice, so summing these constraints yields the constraint  $\sum_{a \in \A} \ge |P|$. Therefore, any feasible solution $v^*$ to DBPLP, in particular 
the optimal solution $d_{\A}$ has to have a value greater than $|P|$.
\end{proof}

\begin{corollary} Let $m_{\A}$ and $d_{\A}$ be the solutions to \linebreak MOLP and DBPLP, respectively. Then $m_{\A}$ $\le$ $d_{\A}$ for any cover $C$ used in DBPLP.
\label{cor:molp-dbplp}
\end{corollary}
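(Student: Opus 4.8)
The plan is to combine the two structural theorems already in hand. Theorem~\ref{thm:mo-sp} identifies $m_{\A}$ with the weight of the minimum-weight $(\emptyset, \A)$ path in $CEG_M$, while Theorem~\ref{THM:DBPLP-PATH} shows that the DBPLP optimum $d_{\A}$ is at least the weight of \emph{every} $(\emptyset, \A)$ path in $CEG_D$. The bridge between them is the observation, made when $CEG_D$ was defined, that $CEG_D$ and $CEG_M$ share the same vertex set and that the edges of $CEG_D$ form a weight-preserving subset of the edges of $CEG_M$. Consequently every $(\emptyset, \A)$ path of $CEG_D$ is simultaneously a $(\emptyset, \A)$ path of $CEG_M$ of the same weight, and a single such path can be used to chain the two inequalities.

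Concretely, I would first invoke the existence argument inside the proof of Theorem~\ref{THM:DBPLP-PATH} to fix some $(\emptyset, \A)$ path $P$ in $CEG_D$ (any such path suffices; one may take the minimum-weight one for definiteness). Second, since $P$ is also a $(\emptyset, \A)$ path in $CEG_M$ and $m_{\A}$ is by Theorem~\ref{thm:mo-sp} the minimum weight over all such $CEG_M$ paths, I conclude $m_{\A} \le |P|$. Third, Theorem~\ref{THM:DBPLP-PATH} gives $|P| \le d_{\A}$. Chaining these yields $m_{\A} \le |P| \le d_{\A}$, which is exactly the corollary; the bound holds for whatever cover $C$ was used, since $P$ and its weight live entirely within that fixed $CEG_D$.

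The one point needing care---and the main obstacle---is verifying that the weighted inclusion $CEG_D \subseteq CEG_M$ holds at the level of individual edges, so that a path of $CEG_D$ is genuinely a path of $CEG_M$ of identical weight. I would check this by matching an extension edge of $CEG_D$ from $W_1$ to $W_1 \cup (A_j \setminus A_j')$ against an extension edge of $CEG_M$: setting $X = A_j'$, $Y = A_j$, $R_i = R_j$, and $E = W_1 \setminus A_j'$, we have $W_1 = X \cup E$ and $W_1 \cup (A_j \setminus A_j') = Y \cup E$, with $deg(A_j', \Pi_{A_j}R_j) = deg(X, Y, R_i)$, so the two edges coincide in both endpoints and weight. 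I would also note the degenerate cardinality case $A_j' = \emptyset$ is covered identically. The extra projection edges present in $CEG_M$ but absent from $CEG_D$ cause no difficulty: they only enlarge the set of $CEG_M$ paths, which can only decrease $m_{\A}$ and hence preserves $m_{\A} \le |P|$. Once this edge-level correspondence is confirmed, the corollary follows with no further computation.
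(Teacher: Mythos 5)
Your proposal is correct and follows essentially the same route as the paper: the paper's proof is a one-line appeal to Theorems~\ref{thm:mo-sp} and~\ref{THM:DBPLP-PATH} together with the observation that $CEG_D$ has the same vertices and a weight-preserving subset of the edges of $CEG_M$, which is exactly the chain $m_{\A} \le |P| \le d_{\A}$ you spell out. Your additional edge-matching verification ($X = A_j'$, $Y = A_j$, $E = W_1 \setminus A_j'$) just makes explicit what the paper leaves as an observation.
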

\begin{proof}
Directly follows from Theorems~\ref{thm:mo-sp} and ~\ref{THM:DBPLP-PATH} and the observation that $CEG_D$ contains the same vertices and a subset of the edges in $CEG_M$.
\end{proof}
Corollary~\ref{cor:molp-dbplp} is a variant of Theorem 2 from reference~\cite{joglekar:degree}, which compares refinements of MOLP and DBPLP. Our proof serves as an alternative combinatorial proof to the inductive proof in reference~\cite{joglekar:degree} that compares the LPs of the bounds. Specifically, by representing MOLP and DBPLP bounds as CEGs and relating them, respectively, to the lengths of shortest and longest $(\emptyset, \A)$ paths, one can directly observe that MOLP is a tighter than DBPLP.

\fi

\end{document}